\date{}
\def\BibTeX{{\rm B\kern-.05em{\sc i\kern-.025em b}\kern-.08em
    T\kern-.1667em\lower.7ex\hbox{E}\kern-.125emX}}
\newtheorem{claim}{Claim}
\newtheorem{theorem}{Theorem}
\newtheorem{definition}{Definition}
\newtheorem{lemma}{Lemma}
\newtheorem{remark}{Remark}
\newtheorem{conjecture}{Conjecture}
\newcommand{\Text}[1]{\text{\textnormal{#1}}}
\newenvironment{lemmarep}[1]{\noindent {\bf #1.}\begin{it}}{\end{it}}
\author{Sina Lashgari, A. Salman Avestimehr, and Changho Suh 
\thanks{S. Lashgari is with the School of Electrical and Computer Engineering, Cornell University, Ithaca, NY (email: sl2232@cornell.edu, avestimehr@ece.cornell.edu); A. S. Avestimehr is with the EE department of University of Southern California, Los Angeles, CA 90089 (email: avestimehr@ee.usc.edu);
and C. Suh is with the Department of Electrical Engineering,
Korea Advanced Institute of Science and Technology, Daejeon, South Korea
(email: chsuh@kaist.ac.kr). The research of A. S. Avestimehr and S. Lashgari is supported  by NSF Grants CAREER 0953117, CCF-1161720, NETS-1161904, and ONR award N000141310094.

This work has been presented in part at the 51st Annual Allerton Conference on Communication, Control and Computing, 2013 ~\cite{Ours}.
}
}
\begin{document}

\title{Linear Degrees of Freedom of the $X$-Channel with Delayed CSIT}

\maketitle

\begin{abstract}
We establish the degrees of freedom of the two-user $X$-channel with delayed channel knowledge at transmitters (i.e., delayed CSIT), assuming linear coding strategies at the transmitters.  We derive a new upper bound and characterize the linear degrees of freedom of this network to be $\frac{6}{5}$. The converse builds upon our development of a general lemma that shows that, if two distributed transmitters employ linear strategies, the ratio of  the dimensions of received linear subspaces at the two receivers cannot exceed $\frac{3}{2}$, due to delayed CSIT. As a byproduct, we also apply this general lemma to the three-user interference channel with delayed CSIT, thereby deriving a new upper bound of $\frac{9}{7}$ on its linear degrees of freedom. This is the first bound that captures the impact of delayed CSIT on the degrees of freedom of this network, under the assumption of linear encoding strategies.

\end{abstract}

\section{Introduction}

The $X$-channel is a canonical setting for the information-theoretic study of interference management in wireless networks. This channel consists of two transmitters causing interference at two receivers, and each transmitter aims to communicate intended messages to both receivers. The question is: how can the transmitters optimally manage the interference and communicate their messages to the receivers? This problem has been studied extensively in the literature and various interference management techniques have been proposed. In particular, in~\cite{mimox}  it is shown that, quite surprisingly, one can significantly improve upon conventional interference management schemes (e.g., orthogonalization) and achieve $4/3$ degrees of freedom (DoF) by using \emph{interference alignment} (IA)~\cite{IA,alignment}.

However, in order to perfectly align the interference, the transmitters need to accurately know the  \emph{current} state of the channels, which is practically very challenging and may even be impossible (due to, for example, high mobility). Thus, a natural question would be: how can the transmitters optimally manage the interference with only \emph{delayed} knowledge of the channel state information (i.e., delayed CSIT)?

In the context of broadcast channel, Maddah-Ali and Tse in~\cite{MAT} have recently shown that delayed CSIT can still be very useful. In particular, for the multi-antenna broadcast channel with delayed CSIT, they developed an innovative transmission strategy that utilizes the past received signals to create signals of common interest to multiple receivers, hence significantly improving DoF by broadcasting them to the receivers. In a sense, these ``signals of common interest'' represent aligned interferences in the past receptions.

Subsequently in~\cite{retrospective,varanasi,abdoli,xchannel}, the impact of delayed CSIT has been explored for a variety of interference networks in which transmit antennas are now distributed at different locations. Unlike multi-antenna broadcast channels, in networks with distributed transmitters, it may not be possible for a transmitter to reconstruct previously received signals, since it may include other transmitters' signals that are not accessible to that transmitter. Hence, although interference alignment has happened in the past receptions, it may not be possible to construct the aligned interference locally at a transmitter and broadcast it to the receivers. Interestingly, even in this setting, delayed CSIT has shown to still provide DoF gains (see e.g., \cite{retrospective,varanasi, abdoli, xchannel}). In particular, for the $X$-channel, Ghasemi-Motahari-Khandani in~\cite{xchannel} developed a scheme that achieves DoF of $\frac{6}{5}$ with delayed CSIT, which is strictly larger than its DoF with no-CSIT (i.e., 1 DoF). However, given that the only upper bound on the DoF of this network is the one with instantaneous CSIT (i.e., $\frac{4}{3}$ DoF), it remains still open whether $\frac{6}{5}$ is the fundamental limit on the DoF of $X$-channel with delayed CSIT, or whether there are more efficient interference management techniques.

Our main contribution in this work is to show that the DoF of the Gaussian $X$- channel with delayed CSIT is indeed $\frac{6}{5}$, under the assumption that only linear encoding schemes are employed at the transmitters. Under this constraint,  only a linear combination of information symbols are allowed to be transmitted at each time. In fact, all of the interference management strategies with delayed CSIT that are developed thus far (e.g.,~\cite{MAT, retrospective, varanasi, abdoli, xchannel}) fall into this category.

The key part of the converse is the development of a general lemma, namely ``Rank Ratio Inequality", that bounds the maximum ratio of the dimensions of received linear-subspaces (at the two receivers) that are created by \emph{distributed} transmitters with delayed CSIT. More specifically, we show that if two distributed transmitters with delayed CSIT employ linear strategies, the ratio of the dimensions of the received signals  cannot exceed $\frac{3}{2}$. With instantaneous CSIT, this ratio can be as large as $2$, and with no CSIT, this ratio is always $1$. As a result, this lemma captures the fundamental impact of delayed CSIT on the dimension of received subspaces. Also, in the case of two centralized transmitters (e.g., multi-antenna BC), this ratio can be as large as $2$, therefore Rank Ratio Inequality also captures the impact of \emph{distributed transmitters} on the dimension of received subspaces. Rank Ratio Inequality can also be viewed  as a generalization of the ``entropy leakage Lemma'' in \cite{VMABinaryIC, vahid2013communication}, which considers a broadcast channel with binary fading, and bounds the maximum ratio of the entropy of received signals at two different receivers.

We also demonstrate how our lemma can be applied to any arbitrary network, in which a receiver decodes its desired message in the presence of two interferers. As an example, we apply the lemma to the three-user interference channel with delayed CSIT and derive a new upper bound of $\frac{9}{7}$ on its linear DoF. This is the first upper bound that captures the impact of delayed CSIT on the degrees of freedom of this network.

{\bf Other Related Results.} 
There have been several converse techniques developed in the literature for networks with delayed CSIT.
For the MISO broadcast channel with delayed CSIT, Maddah-Ali and Tse ~\cite{MAT} have provided an upper bound  based on the genie-aided bounding technique. This technique essentially consists of two steps. First, signals of a set of receivers are given to other set of receivers such that the enhanced network becomes a physically degraded broadcast channel. Using the fact that feedback cannot increase capacity for physically degraded broadcast channels~\cite{ElGamal}, we can then take the non-feedback upper bound as that of the original feedback channel. This technique has also been used in~\cite{MISOBCAllerton} to approximate the capacity of MISO broadcast channel with delayed CSIT, and in \cite{Tassiulas} in the context of broadcast erasure channels with feedback. Also, for time correlated MISO broadcast channel with delayed CSIT, a converse has been proposed in~\cite{Yang}, where the essential element of the converse is the use of the extremal inequality~\cite{Extremal} that bounds the weighted difference of differential entropies at the receivers. Finally, for MIMO interference channel with delayed CSIT,  a converse has been proposed in~\cite{varanasi}, which utilizes the fact that for delayed CSIT, the signals received at different receivers in a timeslot are statistically equivalent; therefore, the entropy of received signals at different receivers in a certain timeslot are equal when conditioned on  past received signals at any specific receiver. 


{\bf Notation.} We use small letters for scalars, arrowed letters (e.g. $\vec x$) for vectors,  capital letters for matrices, and a calligraphic font for sets.
Furthermore, we use bold letters for random entities, and non-bold letters for deterministic values (e.g., realizations of random variables).

\section{System Model \& Main Results} \label{model}

We consider the Gaussian $X$-channel depicted in Fig.~\ref{xch}. It consists of
 two transmitters and two receivers, and each transmitter has a separate message for each of the receivers.
Each node is equipped with a single antenna.

\begin{figure}[h!]
\centering
\includegraphics[scale=.3, trim= 10mm 10mm 10mm 10mm]{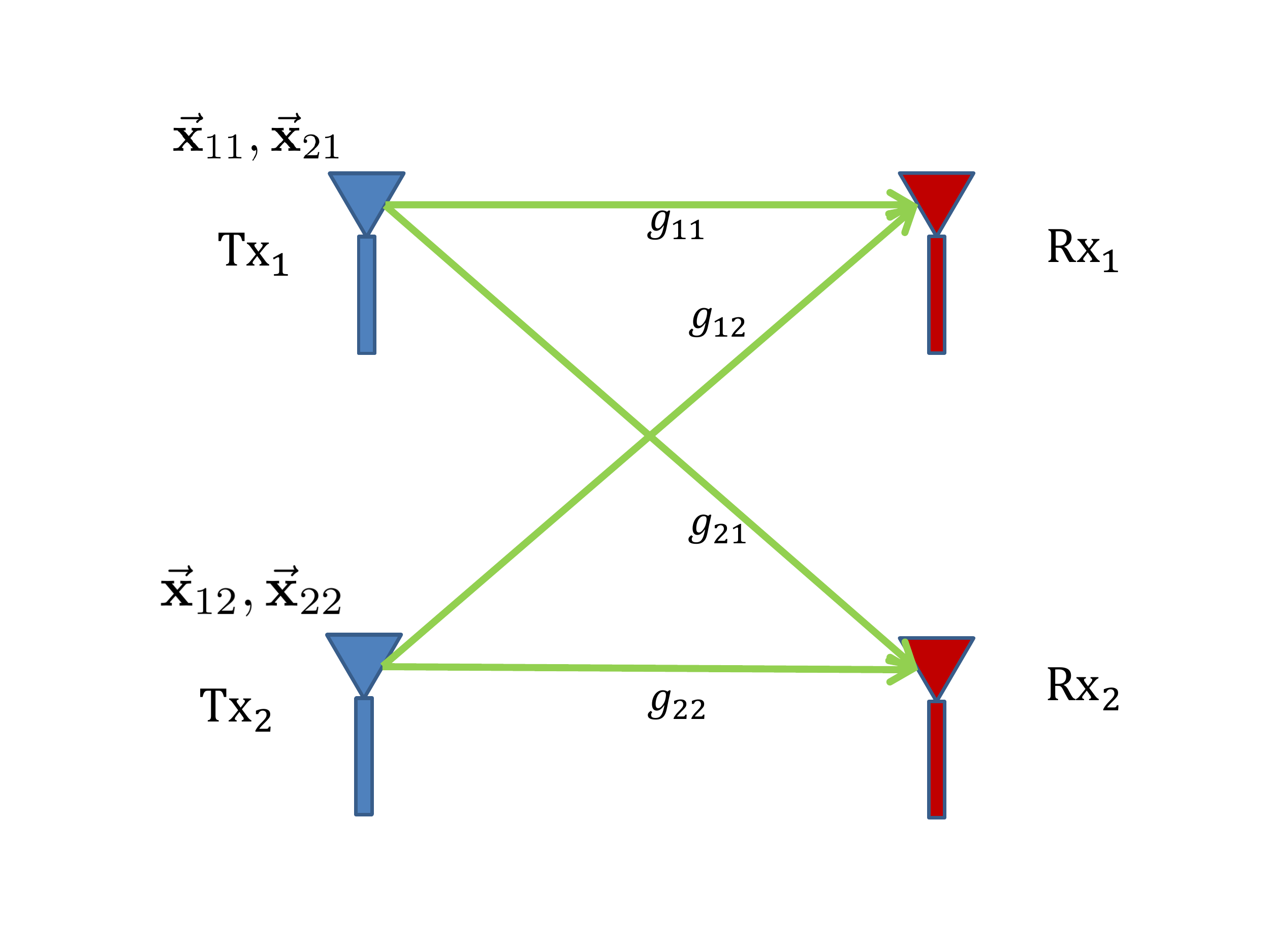}\\
\caption{Network configuration for X-channel.
There are two transmitters and two receivers, where each transmitter has a message for each receiver. We assume time-varying channels, with delayed CSIT.
}\label{xch}
\end{figure}

The received signal at $\text{Rx}_{k}$ ($k\in \{1,2\}$) at time $t$ is given by
\begin{equation}
\bold{y}_{k}(t)=\bold{g}_{k1}(t)\bold{x}_{1}(t)+ \bold{g}_{k2}(t)\bold{x}_{2}(t)+\bold{z}_{k}(t),
\end{equation}
where
$\bold{x}_{j}(t)$ is the transmit signal of $\text{Tx}_j$;
$\bold{g}_{kj}(t) \in \mathbb C$ indicates a channel from 
$\text{Tx}_{j}$ to $\text{Rx}_{k}$;
and $\bold{z}_{k}(t) \sim \mathcal C\mathcal N(0,1)$.
The channel coefficients of $\bold{g}_{kj}(t)$'s are i.i.d across time and users, and they are drawn from a continuous distribution.
We denote by $\bm{ \mathcal {G}}(t)$ the set of all four channel coefficients at time $t$.
In addition, we denote by $\bm{\mathcal {G}}^n$ the set of all channel coefficients from time 1 to $n$, i.e., $$\bm{\mathcal {G}}^n=\{\bold{g}_{kj}(t): k,j \in \{1,2\}, t=1,\ldots,n\}.$$

Denoting the vector of transmit signals for $\text{Tx}_j$ in a block of length $n$ by $\vec {\bf x}_j^n$,
each transmitter $\text{Tx}_j$ obeys an average power constraint, $\frac{1}{n}E\{|| \vec {\bf x}_{j}^n||^2 \}\leq P$.
We assume  delayed channel state information at the transmitters (CSIT).
In other words, at  time $t$, only the states of the past $\bm{\mathcal {G}}^{t-1}$ are known to the transmitters.
Furthermore, we assume that receivers have instantaneous CSIT, meaning that at time $t$, $\bm{\mathcal {G}}^{t}$ is known to all receivers.

We restrict ourselves to linear coding strategies as defined in~\cite{bresler}, in which DoF simply represents the dimension of the linear subspace of transmitted signals.
More specifically, consider a communication scheme with block length $n$, in which transmitter $\text{Tx}_{j}$ wishes to transmit a vector $\vec {\bf x}_{kj} \in \mathbb C^{m_{kj}(n)}$ of $m_{kj}(n) \in \mathbb{N}$ information symbols to $\text{Rx}_k$ ($j,k \in \{1,2\}$).
These information symbols are then modulated with precoding vectors $\vec{\bf v}_{kj}(t)\in \mathbb C^{ m_{kj}(n)}$ at times $t=1,2,\ldots,n$. Note that the precoding vector $\vec{\bf v}_{kj}(t)$ depends only upon the outcome of $\bm{\mathcal {G}}^{t-1}$ due to the delayed CSIT constraint:
\begin{equation}  \label{coding}
\vec{ v}_{kj}(t)=f_{k,j,t}^{(n)} \left( \mathcal G^{t-1} \right).
\end{equation}
Based on this linear precoding, $\text{Tx}_j$ will then send $ {\bf x}_{j}(t)=\vec{\bf v}_{1j}(t)^\top \vec {\bf x}_{1j}+ \vec{\bf v}_{2j}(t)^\top \vec {\bf x}_{2j}$ at time $t$.
We denote by $\bold{V}_{kj}^n\in \mathbb C^{n\times m_{kj}(n)}$ the overall precoding matrix of $\text{Tx}_j$ for $\text{Rx}_k$, such that   the $t$-th row of $\bold{V}_{kj}^n$ is $\vec{\bf v}_{kj}(t)^\top$).
In addition, we denote the precoding functions used by $\text{Tx}_j$ by $f_j^{(n)}=  \{ f_{1,j,t}^{(n)}  ,  f_{2,j,t}^{(n)}    \}_{t=1}^{n}$, $j=1,2$.

Based on the above setting, the received signal at $\text{Rx}_{k}$ ($k\in \{1,2\}$) after the $n$ time steps of the communication will be
\begin{equation}
\vec {\bf y}_{k}^n= \bold{G}_{k1}^n (\bold{V}_{11}^n \vec {\bf x}_{11} + \bold{V}_{21}^n \vec {\bf x}_{21}) + \bold{G}_{k2}^n (\bold{V}_{12}^n \vec {\bf x}_{12} + \bold{V}_{22}^n \vec {\bf x}_{22}) + \vec {\bf z}_{k}^n,
\end{equation}
 where $\bold{G}_{kj}^n$ is the $n\times n$ diagonal matrix whose $t$-th element on the diagonal is $\bold{g}_{kj}(t)$.
\footnote{For $j,k\in \{1,2\}$, we define $\bold{G}_{kj}^0 \bold{V}_{kj}^0  \triangleq 0_{1\times m_{kj}(n)} $; therefore, for instance, we have $\Text{rank}\left[ \bold{G}_{k1}^0 \bold{V}_{k1}^0 \quad  \bold{G}_{k2}^0\bold{V}_{k2}^0 \right]=0$, $k\in \{1,2\}$.} 
Now, consider the decoding of $\vec {\bf x}_{kj}$ at $\text{Rx}_k$ (i.e., the $m_{kj}(n)$ information symbols of $\text{Tx}_{j}$ for $\text{Rx}_k$). The corresponding interference subspace at $\text{Rx}_k$ will be
 \begin{align*}
   \bm{\mathcal I}_{kj}=\text{colspan} \left( \bold{G}_{kj}^n \bold{V}_{k'j}^n \right)  \cup\text{colspan} \left( \bold{G}_{kj'}^n \bold{V}_{kj'}^n \right)  \cup \text{colspan} \left( \bold{G}_{kj'}^n \bold{V}_{k'j'}^n \right)
\end{align*} 
where $j'=3-j, k'=3-k$, and $\text{colspan}(.)$ of a matrix corresponds to the sub-space that is spanned by its columns. For instance, $\bm{\mathcal I}_{11}=\text{colspan} (\bold{G}_{11}^n \bold{V}_{21}^n) \cup \text{colspan} (\bold{G}_{12}^n \bold{V}_{12}^n) \cup  \text{colspan} (\bold{G}_{12}^n \bold{V}_{22}^n) $. Let $\bm{\mathcal I}_{kj}^c \subseteq \mathbb{C}^n$ denote the subspace orthogonal to $\bm{\mathcal I}_{kj}$. Then, in the regime of asymptotically high transmit powers (i.e., ignoring the noise), the decodability of information symbols from $\text{Tx}_{j}$ at $\text{Rx}_k$ corresponds to the constraints that  the image of $\text{colspan}(\bold{G}_{kj}^n \bold{V}_{kj}^n)$ on $\bm{\mathcal I}_{kj}^c$ has dimension $m_{kj}(n)$:
\begin{equation}
\Text{dim} \left( \text{Proj}_{\bm{\mathcal I}_{kj}^c} \text{colspan} \left( \bold{G}_{kj}^n \bold{V}_{kj}^n \right) \right) =\Text{dim} \left( \text{colspan} \left( \bold{V}_{kj}^n \right) \right) = m_{kj}(n).\label{decode}
\end{equation}

Based on this setting, we now define the sum linear degrees of freedom of the $X$-channel.
\begin{definition}
Four-tuple $(d_{11},d_{12},d_{21},d_{22})$ degrees of freedom are linearly achievable if there exists a sequence
\\$\{  f_1^{(n)},f_2^{(n)} \}_{n=1}^{\infty}$ such that for each $n$ and the choice of $(m_{11}(n),m_{12}(n),m_{21}(n),m_{22}(n))$, $(\bold{V}_{11}^n,\bold{V}_{12}^n,\bold{V}_{21}^n,\bold{V}_{22}^n)$ satisfy the decodability condition of (\ref{decode}) with probability 1, and $\forall (j,k)$,
\begin{equation}
d_{kj}=\lim_{n\to\infty} \frac{m_{kj}(n)}{n}.
\end{equation}

We also define the linear degrees of freedom region $\mathcal D$ as the closure of the set of all achievable 4-tuples $(d_{11},d_{12},d_{21},d_{22})$.
Furthermore, the sum linear degrees of freedom ($\Text{DoF}_{\Text{L-sum}}$) is then defined as follows:
\begin{equation}
\Text{DoF}_{\Text{L-sum}}=\max\sum_{k,j\in \{1,2\}}^{}d_{kj},\qquad \textrm{s.t. }\quad (d_{11},d_{12},d_{21},d_{22})\in\mathcal D.
\end{equation}
\end{definition}

In case transmitters have instantaneous CSIT,
it was shown in \cite{IA,xjafar} that the sum degrees of freedom is $\frac{4}{3}$. The achievability uses interference alignment that enables us to deliver four symbols over three timeslots.
On the other hand, in the non-CSIT case, one can readily see that the received signals at the two receivers are statistically identical and therefore the DoF collapses to 1, which is that of the multiple access channel.
For the case of delayed CSIT, Ghasemi-Motahari-Khandani in \cite{xchannel} develops a new scheme that achieves the sum DoF of $\frac{6}{5}$. 

Our main result in this paper is the following theorem, proved in Section~\ref{convers}, which states that $\frac{6}{5}$ is the maximum DoF that can be achieved using linear encoding schemes.

\begin{theorem}\label{theorem1}
For the X-channel with delayed CSIT,
\begin{equation}
\Text{DoF}_{\Text{L-sum}}= \frac{6}{5}.
\end{equation}
\end{theorem}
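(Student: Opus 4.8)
\medskip
\noindent The lower bound $\Text{DoF}_{\Text{L-sum}}\ge\frac{6}{5}$ is already in hand: the linear scheme of Ghasemi--Motahari--Khandani~\cite{xchannel} delivers six symbols over five timeslots using only delayed CSIT and linear precoding, so the entire task is the converse $\Text{DoF}_{\Text{L-sum}}\le\frac{6}{5}$. The plan is to obtain it from two symmetric dimension-counting inequalities --- one per receiver --- each of which is produced by feeding the ``Rank Ratio Inequality'' (the general distributed-transmitter lemma announced in the introduction) the appropriate pair of precoding matrices.

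First I would fix an achievable scheme of block length $n$, abbreviate $A=m_{11}(n)+m_{12}(n)$ and $B=m_{21}(n)+m_{22}(n)$ for the numbers of information symbols destined to $\text{Rx}_1$ and to $\text{Rx}_2$, and let $D_1$ (resp.\ $D_2$) denote the dimension occupied at $\text{Rx}_1$ (resp.\ $\text{Rx}_2$) by the signals the two transmitters send to the \emph{other} receiver, i.e.\ $D_1=\dim\big(\text{colspan}(\mathbf{G}_{11}^n\mathbf{V}_{21}^n)+\text{colspan}(\mathbf{G}_{12}^n\mathbf{V}_{22}^n)\big)$ and symmetrically for $D_2$. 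A short linear-algebra argument from the decodability condition~\eqref{decode}, applied jointly to the two messages decoded at $\text{Rx}_1$, shows that with probability one the $A$-dimensional desired subspace there meets this $D_1$-dimensional interference subspace only in the origin; since the ambient dimension is $n$ this gives $A+D_1\le n$, and symmetrically $B+D_2\le n$. Next I would invoke the Rank Ratio Inequality on the distributed precoder pair $(\mathbf{V}_{21}^n,\mathbf{V}_{22}^n)$ --- legitimate since both precoders depend only on $\bm{\mathcal {G}}^{t-1}$: the dimension they occupy at $\text{Rx}_2$ is $\dim\big(\text{colspan}(\mathbf{G}_{21}^n\mathbf{V}_{21}^n)+\text{colspan}(\mathbf{G}_{22}^n\mathbf{V}_{22}^n)\big)=B$ (decodability of $\vec{\bf x}_{21}$ at $\text{Rx}_2$ forces these two column spans to intersect trivially), while the dimension they occupy at $\text{Rx}_1$ is exactly $D_1$; since that ratio cannot exceed $\frac{3}{2}$ we obtain $D_1\ge\frac{2}{3}B$, and the mirror application to $(\mathbf{V}_{11}^n,\mathbf{V}_{12}^n)$ gives $D_2\ge\frac{2}{3}A$. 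Substituting, $A+\frac{2}{3}B\le n$ and $\frac{2}{3}A+B\le n$; adding, $\frac{5}{3}(A+B)\le 2n$, hence $\sum_{k,j}m_{kj}(n)\le\frac{6}{5}n$, and dividing by $n$ and letting $n\to\infty$ yields $\sum_{k,j}d_{kj}\le\frac{6}{5}$, which is the desired converse.

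The reduction above is elementary; the real content --- and the step I expect to be the main obstacle --- is the Rank Ratio Inequality itself, i.e.\ that two \emph{distributed} transmitters restricted to \emph{delayed} CSIT and linear precoding can never inflate the received dimension at one receiver beyond $\frac{3}{2}$ times that at the other (the ratio being $2$ under instantaneous CSIT and $1$ under no CSIT). I would attack it by induction over the $n$ timeslots, maintaining an invariant that relates the two current received ranks and exploiting the fact that any precoder chosen at time $t$ is a function of $\bm{\mathcal {G}}^{t-1}$ only and is therefore statistically blind to the fresh, independent channel realizations $\bm{\mathcal {G}}(t)$; this blindness is what prevents a newly injected transmit direction from raising the two ranks too asymmetrically, and the constant $\frac{3}{2}$ should fall out of amortizing these per-step increments. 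A secondary point that still needs care is the almost-sure disjointness of desired and interference subspaces used in the counting step above: it follows from~\eqref{decode} together with the genericity of the i.i.d.\ continuous channel coefficients, but it requires a short argument precisely because the precoding matrices involved are correlated through their common channel history.
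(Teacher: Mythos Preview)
Your proposal is correct and follows essentially the same route as the paper: the paper also derives the pair of inequalities $A+\tfrac{2}{3}B\le n$ and $\tfrac{2}{3}A+B\le n$ (written there as $(d_{11}+d_{12})+\tfrac{3}{2}(d_{21}+d_{22})\le\tfrac{3}{2}$ and its mirror) by combining the joint separability $A+D_1\le n$, $B+D_2\le n$ --- obtained from the individual decodability conditions via a rank sub-modularity argument (Lemma~\ref{cl1}) --- with one application of the Rank Ratio Inequality (Lemma~\ref{lemma1}) per receiver. The only nuance is that the ``almost-sure disjointness of desired and interference subspaces'' you flag at the end is handled in the paper purely by sub-modularity of rank (Lemmas~\ref{ranksubmod} and~\ref{cl1}), not by channel genericity.
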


Our converse proof builds upon the following key lemma, which is proved in Section~\ref{sec:keyLemmaProof}.

\begin{lemma}
\label{lemma1}
{\bf (Rank Ratio Inequality)}
For any linear coding strategy $\{  f_1^{(n)},f_2^{(n)} \}$, with corresponding $ \bold{V}_{11}^{n},\bold{V}_{12}^{n} $ as defined in (\ref{coding}),
\begin{align}
\label{ineq1}
\Text{rank}\left[ \bold{G}_{11}^n \bold{V}_{11}^n \quad  \bold{G}_{12}^n\bold{V}_{12}^n \right] \stackrel{a.s.}{\leq}  \frac{3}{2} \Text{rank} \left[\bold{G}_{21}^n \bold{V}_{11}^n \quad  \bold{G}_{22}^n\bold{V}_{12}^n \right].
\end{align}
\end{lemma}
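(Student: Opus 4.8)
The plan is to recast \eqref{ineq1} as a deterministic statement about how the dimensions of two growing families of subspaces behave as the time slots $t=1,\dots,n$ are processed one at a time, and then to read the factor $\tfrac32$ off an amortized count over the slots. We fix $n$, so that the symbol counts $m_{kj}(n)$ and the precoders are fixed, and induct on $t$. First, pass from column spans in $\mathbb C^n$ to row spans: let $\vec v_{11}(t),\vec v_{12}(t)$ be the rows of $V_{11}^n,V_{12}^n$, and put $\vec u_i(t)=\big(g_{i1}(t)\,\vec v_{11}(t)^{\top},\; g_{i2}(t)\,\vec v_{12}(t)^{\top}\big)$, $S_i(t)=\text{span}\{\vec u_i(1),\dots,\vec u_i(t)\}$, and $r_i(t):=\dim S_i(t)$. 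Then $\Text{rank}\big[\,G_{i1}^n V_{11}^n\ \ G_{i2}^n V_{12}^n\,\big]=r_i(n)$, so it suffices to show $r_1(n)\le\tfrac32 r_2(n)$ a.s. I would also introduce the two ``pure'' subspaces $T_1(t)=\text{span}\{(\vec v_{11}(s)^{\top},\vec 0):s\le t\}$ and $T_2(t)=\text{span}\{(\vec 0,\vec v_{12}(s)^{\top}):s\le t\}$, which depend only on the precoders (hence only on past CSIT), together with $T(t)=T_1(t)\oplus T_2(t)$.

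The delayed-CSIT hypothesis enters only at the next step. Since $\vec v_{11}(t),\vec v_{12}(t)$ are deterministic functions of $\mathcal G^{t-1}$ while $g_{i1}(t),g_{i2}(t)$ are drawn afresh from a continuous distribution, conditioning on $\mathcal G^{t-1}$ turns $\vec u_i(t)$ into a generic point of the (at most $2$-dimensional) space $\text{span}\{(\vec v_{11}(t)^{\top},\vec 0),(\vec 0,\vec v_{12}(t)^{\top})\}$. A routine genericity argument then shows that, for each fixed $t$ and outside a measure-zero set of channel realizations, $\dim S_i(t)>\dim S_i(t-1)$ \emph{if and only if} at least one of the pure directions $(\vec v_{11}(t)^{\top},\vec 0)$, $(\vec 0,\vec v_{12}(t)^{\top})$ fails to lie in $S_i(t-1)$; the same argument gives $S_1(t)+S_2(t)=T(t)$. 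A union bound over $t=1,\dots,n$ discards the exceptional set and reduces the lemma to a purely deterministic claim about the subspaces $S_i(t),T_j(t)$.

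Now I would run the induction on $t$. Inducting naively on ``$r_1(t)\le\tfrac32 r_2(t)$'' cannot work, because one slot can raise $r_1$ while leaving $r_2$ fixed; so I would strengthen the hypothesis to a joint invariant on $\big(r_1(t),r_2(t),\dim T_1(t),\dim T_2(t)\big)$ — equivalently, on $r_1(t),r_2(t)$ and the block deficiencies $\varepsilon_i(t)=\dim T_1(t)+\dim T_2(t)-r_i(t)\ge 0$, which measure how far $S_i(t)$ is from decomposing along the two blocks. Classifying each slot by the increments $(\Delta\dim T_1,\Delta\dim T_2,\Delta r_1,\Delta r_2)$ and using the characterization above, in every type of slot but one the target ratio is preserved after the appropriate accounting; the single bad type is the slot in which $r_1$ increases but $r_2,\dim T_1,\dim T_2$ all stay put. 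There the characterization forces structure: $\Delta r_2=0$ means $S_2(t-1)$ already contains both pure directions of slot $t$, while $\Delta r_1=1$ means $S_1(t-1)$ is missing one of them, say $(\vec v_{11}(t)^{\top},\vec 0)$; and then (in the principal sub-case) the dimension that $S_1$ newly acquires at slot $t$ is exactly that pure direction. This lets one compensate each bad slot against an earlier ``doubly new'' slot — one in which both $\dim T_1$ and $\dim T_2$ grew — with weight at most $\tfrac12$, and summing the resulting per-slot inequality from $t=0$ (where $r_i=\dim T_j=0$) then yields $2r_1(n)\le 3r_2(n)$.

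I expect this last step — isolating the exact invariant that survives the bad slots and setting up the compensation (equivalently, choosing the right weighted potential function) — to be the main obstacle; the first two steps are essentially bookkeeping. It is also what pins the constant to $\tfrac32$: the ``pure-direction forcing'' of the second step is precisely the footprint of delayed CSIT, and it is what excludes the larger ratios permitted by instantaneous CSIT (where the precoder at slot $t$ may depend on $g_{\cdot j}(t)$, the forcing breaks, and the ratio can be as large as $2$), while with no CSIT the two received subspaces have a.s.\ equal dimension and the ratio is always $1$.
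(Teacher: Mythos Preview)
Your steps 1--2 are correct and in fact coincide with the paper's core probabilistic lemma: conditioning on $\mathcal G^{t-1}$, the rank at $\text{Rx}_i$ fails to increase at slot $t$ \emph{iff} both pure directions $(\vec v_{11}(t)^\top,\vec 0)$ and $(\vec 0,\vec v_{12}(t)^\top)$ already lie in $S_i(t-1)$, almost surely. This is exactly the content of the paper's Lemma~\ref{main5}, and you are right that this is the sole place delayed CSIT enters.

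The gap is in step~3. Your proposed compensation --- charge each ``bad'' slot (where $r_1$ grows but $r_2$ does not) against an earlier \emph{doubly-new} slot with weight $\tfrac12$ --- does not work. A three-slot example already breaks it: slot~1 sends a doubly-new pair $(e_1,f_1)$; slot~2 sends $(e_2,f_1)$ (singly new); slot~3 has $\text{Tx}_1$ send the pure-$1$ combination that $\text{Rx}_2$ can already isolate from slots~1--2 while $\text{Tx}_2$ sends~$0$. Slot~3 is bad, $N_{\text{doubly-new}}=1$, and $R_1=3$, $R_2=2$, so $2R_1=3R_2$ is \emph{tight} --- no slack for a weight-$\tfrac12$ charge. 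More generally, doubly-new slots carry zero surplus for the inequality $R_1+\dim T_1+\dim T_2\le 3R_2$ (each side increments by~$3$), so they cannot absorb bad-slot deficits.

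What the paper does instead is global rather than per-slot. It introduces the ``pure'' subspaces $P_j=S_2(n)\cap(U_j\oplus 0)$ with $\rho_j=\dim P_j$, and observes that in every bad slot the transmitted $\vec v_{1j}(t)$ lies in $P_j$ (this is exactly your structural observation). Hence the rank of the $\text{Rx}_1$ signal restricted to bad slots is at most $\rho_1+\rho_2$. A one-line dimension count gives $\rho_j=R_2-\dim T_{3-j}$, so $R_1-R_2\le \rho_1+\rho_2=2R_2-\dim T_1-\dim T_2$; combined with the trivial $R_1\le \dim T_1+\dim T_2$ this yields $2R_1\le 3R_2$. In your language, the missing invariant is $R_1(t)+\dim T_1(t)+\dim T_2(t)\le 3R_2(t)$, and the slots that carry the surplus are the \emph{singly-new} slots and the old--old slots with $b_t=1$ (each of which enlarges some $P_j$), not the doubly-new ones. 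Your $\varepsilon_2$ is exactly $R_2-(\rho_1+\rho_2)$, so you were one identity away from the right potential.
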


\begin{remark}
Note that this lemma holds for any arbitrary network (or sub-network) with two transmitters and two receivers. It does not require any specific decodability assumption at receivers. The inequality of~\eqref{ineq1} says that the ratio of the ranks of received beamforming matrices at $\Text{Rx}_1$  and  $\Text{Rx}_2$ is at most $\frac{3}{2}$.
For the case of having instantaneous CSIT, one can show that the ratio of $\Text{rank}[\bold{G}_{11}^n \bold{V}_{11}^n \quad  \bold{G}_{12}^n\bold{V}_{12}^n]$ to
$\Text{rank}[\bold{G}_{21}^n \bold{V}_{11}^n \quad  \bold{G}_{22}^n\bold{V}_{12}^n]$ can be up to 2.\footnote{To see this, consider the following two-timeslot scheme. In time 1, $\text{Tx}_1, \text{Tx}_2$ send $\bold{x}_1, \bold{x}_2$ respectively. $\text{Rx}_2$ then gets $\bold{g}_{21}(1)\bold{x}_1+ \bold{g}_{22}(1)\bold{x}_2$.
In time 2, $\text{Tx}_1, \text{Tx}_2$ send $\frac{\bold{g}_{21}(1)}{\bold{g}_{21}(2)} \bold{x}_1, \frac{\bold{g}_{22}(1)}{\bold{g}_{22}(2)} \bold{x}_2$ respectively. $\text{Rx}_2$ then gets the same equation as the one received in time 1. On the other hand, $\text{Rx}_1$ gets a new equation almost surely. Therefore, the rank of the received signal at $\text{Rx}_1$ can be twice that of $\text{Rx}_2$. Also one can readily show that the two is  the maximum that can be achieved. 
}
Hence, Lemma~\ref{lemma1} characterizes the impact of delayed CSIT on the maximum ratio of the ranks of received beamforming matrices. 
\end{remark}
\begin{remark}
Lemma~\ref{lemma1} can be viewed as a generalization of  the ``entropy leakage Lemma'' in \cite{VMABinaryIC}. Entropy leakage lemma in \cite{VMABinaryIC}  considers a broadcast channel with binary fading, and bounds the maximum ratio of the entropy of received signals at two different receivers. In fact, Rank Ratio Inequality can be viewed as an extension of this lemma to the case of two distributed transmitters with linear encoding strategies, in which the entropy is approximated by the rank of the received beamforming matrices.
\end{remark}

\section{Proof of Theorem \ref{theorem1}} \label{convers}

In this section we will prove Theorem~\ref{theorem1}.

\subsection{Achievability}
As mentioned in the previous section, the achievability is provided in \cite{xchannel}, and utilizes a linear encoding scheme to achieve $\frac{6}{5}$.
 Here we review the scheme to illustrate how beamforming vectors are chosen.
We set $n=5, m_{11}(n)=2 , m_{12}(n)=1 , m_{21}(n)=1 , m_{22}(n)=2$. Let the information symbols of the transmitters be denoted by
\begin{equation}
\vec {\bf x}_{11}=\left[\begin{array}{c}a_1 \\a_2 \end{array}\right], \quad \vec {\bf x}_{12}=\left[\begin{array}{c}b_1\end{array}\right], \quad \vec {\bf x}_{21}=\left[\begin{array}{c}c_1  \end{array}\right], \quad \vec {\bf x}_{22}=\left[\begin{array}{c}d_1 \\d_2 \end{array}\right].
\end{equation}
In $t=1$, $\text{Tx}_1$ sends $a_1$, and  $\text{Tx}_2$ sends $b_1$, which corresponds to choosing the following beamforming vectors at the transmitters
$$\vec { v}_{11}=\left[\begin{array}{c}1 \\0 \end{array}\right], \quad \vec { v}_{12}=\left[\begin{array}{c}1\end{array}\right], \quad \vec { v}_{21}=\left[\begin{array}{c}0  \end{array}\right], \quad \vec { v}_{22}=\left[\begin{array}{c}0 \\0 \end{array}\right].$$
In $t=2$, $\text{Tx}_1$ sends $a_2$, and  $\text{Tx}_2$ sends $b_1$, which corresponds to choosing the following beamforming vectors at the transmitters
$$\vec { v}_{11}=\left[\begin{array}{c}0 \\1 \end{array}\right], \quad \vec { v}_{12}=\left[\begin{array}{c}1\end{array}\right], \quad \vec { v}_{21}=\left[\begin{array}{c}0  \end{array}\right], \quad \vec { v}_{22}=\left[\begin{array}{c}0 \\0 \end{array}\right].$$
Therefore, by the end of $t=2$, $\text{Rx}_2$ can cancel $b_1$ from its received signals to recover an equation only involving $a_1$ and $a_2$, denoted by $\vec {\bf m}_1^\top \vec {\bf x}_{11}$. It is easy to see that, if this equation is delivered to $\text{Rx}_1$, it can decode all of its desired information symbols (i.e., $\vec {\bf x}_{11}$ and  $\vec {\bf x}_{12}$). Hence, it is an equation of interest to $\text{Rx}_1$ that is known at $\text{Rx}_2$, and can be created by $\text{Tx}_1$. 

A similar schemes is applied in the next two time steps. More specifically, in $t=3$, $\text{Tx}_1$ sends $c_1$, and  $\text{Tx}_2$ sends $d_1$, which corresponds to choosing the following beamforming vectors at the transmitters
$$\vec { v}_{11}=\left[\begin{array}{c}0 \\0 \end{array}\right], \quad \vec { v}_{12}=\left[\begin{array}{c}0\end{array}\right], \quad \vec { v}_{21}=\left[\begin{array}{c}1  \end{array}\right], \quad \vec { v}_{22}=\left[\begin{array}{c}1 \\0 \end{array}\right].$$
In $t=4$, $\text{Tx}_1$ sends $c_1$, and  $\text{Tx}_2$ sends $d_2$, which corresponds to choosing the following beamforming vectors at the transmitters
$$\vec { v}_{11}=\left[\begin{array}{c}0 \\0 \end{array}\right], \quad \vec { v}_{12}=\left[\begin{array}{c}0\end{array}\right], \quad \vec { v}_{21}=\left[\begin{array}{c}1  \end{array}\right], \quad \vec { v}_{22}=\left[\begin{array}{c}0 \\1 \end{array}\right].$$
Therefore, by the end of $t=4$, $\text{Rx}_1$ can cancel $c_1$ from its received signals to recover an equation only involving $d_1$ and $d_2$, denoted by $\vec {\bf m}_2^\top \vec {\bf x}_{22}$. Again, it is easy to see that, if this equation is delivered to $\text{Rx}_2$, it can decode all of its desired information symbols (i.e., $\vec {\bf x}_{21}$ and  $\vec {\bf x}_{22}$). Hence, it is an equation of interest to $\text{Rx}_2$ that is known at $\text{Rx}_1$, and can be created by $\text{Tx}_2$.\footnote{One can check that $\vec {\bf m}_1=[\bold{g}_{22}(2)\bold{g}_{21}(1) \quad   - \bold{g}_{22}(1)\bold{g}_{21}(2)]^\top    $, and $\vec {\bf m}_2=  [ \bold{g}_{12}(3)\bold{g}_{11}(4)  \quad    -  \bold{g}_{11}(3) \bold{g}_{12}(4) ]^\top  $.}

Now, in $t=5$, $\text{Tx}_1$ sends $\vec {\bf m}_1^\top \vec {\bf x}_{11}$, and  $\text{Tx}_2$ sends $\vec {\bf m}_2^\top \vec {\bf x}_{22}$.
Since each of these transmit signals is already known at one of the receivers, after this transmission, $\text{Rx}_1$ will  recover $\vec {\bf m}_1^\top \vec {\bf x}_{11}$ and $\text{Rx}_2$ will  recover $\vec {\bf m}_2^\top \vec {\bf x}_{22}$. Therefore, all information symbols are delivered to their corresponding receivers, achieving sum DoF of $\frac{6}{5}$.

\subsection{Converse}
We will now prove the converse, which is the main contribution of the paper. As mentioned in Section \ref{model}, the key idea behind the converse is Lemma~\ref{lemma1}, which we restate below (proof of Lemma \ref{lemma1} is provided in Section  \ref{sec:keyLemmaProof}).

\begin{lemmarep}{Lemma~\ref{lemma1}}
{\bf (Rank Ratio Inequality)}
For any linear coding strategy $\{  f_1^{(n)},f_2^{(n)} \}$, with corresponding $ \bold{V}_{11}^{n},\bold{V}_{12}^{n} $ as defined in (\ref{coding}),
\begin{align}
 \Text{rank}\left[ \bold{G}_{11}^n \bold{V}_{11}^n \quad  \bold{G}_{12}^n\bold{V}_{12}^n \right] \stackrel{a.s.}{\leq}  \frac{3}{2} \Text{rank} \left[\bold{G}_{21}^n \bold{V}_{11}^n \quad  \bold{G}_{22}^n\bold{V}_{12}^n \right].
\end{align}
\end{lemmarep}

To prove the converse we also need the following three lemmas.
The following lemma states the sub-modularity property of rank of matrices (see ~\cite{Lovasz} for more details).

\begin{lemma} \label{ranksubmod}
{\bf (Sub-modularity of rank)} Consider a matrix $A^{m\times n} \in \mathbb C^{m\times n}$.
Let $A_I$, $I\subseteq \{  1,2 , \ldots , n\}$ denote the sub-matrix of $A$ created by those columns in $A$ which have their indices in $I$.
Then, for any $I_1,I_2 \subseteq \{  1,2 , \ldots , n\}$ we have
\begin{equation}
\Text{rank}[A_{I_1}] + \Text{rank}[A_{I_2}] \geq \Text{rank}[A_{I_1\cap I_2}] + \Text{rank}[A_{I_1\cup I_2}].
\end{equation}
\end{lemma}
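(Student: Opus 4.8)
The plan is to recast the statement purely in terms of column spans and reduce it to the modular (dimension) law for subspaces, which is where all the content sits. Write $\vec a_i\in\mathbb C^m$ for the $i$-th column of $A$, and for $I\subseteq\{1,\dots,n\}$ set $V_I:=\mathrm{colspan}(A_I)=\mathrm{span}\{\vec a_i:i\in I\}\subseteq\mathbb C^m$, so that $\mathrm{rank}[A_I]=\dim V_I$. Two elementary observations do the work. First, $V_{I_1\cup I_2}=V_{I_1}+V_{I_2}$, because spanning the union of two sets of vectors is the same as forming the sum of their spans. Second, $V_{I_1\cap I_2}\subseteq V_{I_1}\cap V_{I_2}$, because every column whose index lies in $I_1\cap I_2$ is among the generators of both $V_{I_1}$ and $V_{I_2}$; here the reverse inclusion may fail, but only this direction is needed.

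Next I would invoke the modular law: for finite-dimensional subspaces $U,W$ of a vector space, $\dim U+\dim W=\dim(U+W)+\dim(U\cap W)$ (itself a consequence of rank--nullity applied to the sum map $U\oplus W\to U+W$, whose kernel is isomorphic to $U\cap W$). Taking $U=V_{I_1}$, $W=V_{I_2}$ and using the first observation, $\mathrm{rank}[A_{I_1}]+\mathrm{rank}[A_{I_2}]=\dim(V_{I_1\cup I_2})+\dim(V_{I_1}\cap V_{I_2})$. By the second observation and monotonicity of dimension under inclusion, $\dim(V_{I_1}\cap V_{I_2})\ge\dim V_{I_1\cap I_2}=\mathrm{rank}[A_{I_1\cap I_2}]$, while $\dim(V_{I_1\cup I_2})=\mathrm{rank}[A_{I_1\cup I_2}]$. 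Substituting gives $\mathrm{rank}[A_{I_1}]+\mathrm{rank}[A_{I_2}]\ge\mathrm{rank}[A_{I_1\cup I_2}]+\mathrm{rank}[A_{I_1\cap I_2}]$, as claimed.

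There is no real obstacle: the entire argument is the single remark that the inclusion $V_{I_1\cap I_2}\subseteq V_{I_1}\cap V_{I_2}$ and the equality $V_{I_1\cup I_2}=V_{I_1}+V_{I_2}$ point in exactly the directions the inequality requires, after which it is a textbook dimension count. Two alternatives exist but I would not pursue them: one can argue matricially by reducing $A_{I_1\cap I_2}$ via column operations and then bounding the rank increments as the remaining columns of $I_1$ and then of $I_2$ are adjoined, which reproduces the same bound more laboriously; or one can simply cite that the rank function of the vector matroid of $A$'s columns is submodular, which is precisely this statement (with the proof above). The subspace argument is the cleanest, so that is the route I would take.
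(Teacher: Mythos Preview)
Your argument is correct and is the standard proof of rank submodularity: identify $\mathrm{rank}[A_I]$ with $\dim V_I$, use $V_{I_1\cup I_2}=V_{I_1}+V_{I_2}$ and $V_{I_1\cap I_2}\subseteq V_{I_1}\cap V_{I_2}$, and apply the dimension formula $\dim U+\dim W=\dim(U+W)+\dim(U\cap W)$. There is nothing to criticize.

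As for comparison with the paper: the paper does not actually prove this lemma. It simply states the result and refers the reader to \cite{Lovasz} for details, treating it as a known fact about the rank function of a vector matroid. Your write-up therefore supplies what the paper omits, and the route you take (reduction to the modular law for subspaces) is exactly the argument one finds in standard references, so there is no divergence in approach to discuss.
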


The following lemma is helpful in providing an equivalent condition for decodability of messages in (\ref{decode}), whose proof is based on basic linear algebra and omitted.

\begin{lemma} \label{eq}
For two matrices $A,B$ of the same row size, 
\begin{equation}
\Text{dim} (\Text{Proj}_{\Text{colspan}(B)^c}\Text{colspan}(A)) = \Text{rank}[A \quad B] - \Text{rank}[B],
\end{equation}
where $\Text{Proj}_{\Text{colspan}(B)^c}\Text{colspan}(A)$ is the orthogonal projection of column span of $A$ on the orthogonal complement of the column span of $B$.
\end{lemma}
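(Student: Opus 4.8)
The plan is to translate every quantity into the language of subspace dimensions and then apply the rank--nullity theorem to a single well-chosen linear map. Write $\mathcal A = \Text{colspan}(A)$ and $\mathcal B = \Text{colspan}(B)$, both subspaces of the common ambient space $\mathbb C^m$, where $m$ is the shared row size, and let $P$ denote the orthogonal projection onto $\mathcal B^c = \mathcal B^\perp$. Then the left-hand side of the claimed identity is exactly $\Text{dim}(P\mathcal A)$, while the right-hand side equals $\Text{dim}(\mathcal A + \mathcal B) - \Text{dim}(\mathcal B)$, since $\Text{rank}[A \quad B] = \Text{dim}(\mathcal A + \mathcal B)$ and $\Text{rank}[B] = \Text{dim}(\mathcal B)$. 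So the goal reduces to establishing $\Text{dim}(P\mathcal A) = \Text{dim}(\mathcal A + \mathcal B) - \Text{dim}(\mathcal B)$.

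The key step is to restrict $P$ to the finite-dimensional domain $\mathcal A + \mathcal B$ and apply rank--nullity. First I would observe that, because $P$ is the orthogonal projection onto $\mathcal B^\perp$, its kernel is precisely $\mathcal B$; hence the kernel of $P$ restricted to $\mathcal A + \mathcal B$ is $\mathcal B \cap (\mathcal A + \mathcal B) = \mathcal B$, of dimension $\Text{dim}(\mathcal B)$. Second, I would note that the image satisfies $P(\mathcal A + \mathcal B) = P\mathcal A + P\mathcal B = P\mathcal A$, since $P\mathcal B = \{0\}$. Rank--nullity applied to $P|_{\mathcal A + \mathcal B}$ then gives $\Text{dim}(\mathcal A + \mathcal B) = \Text{dim}(\mathcal B) + \Text{dim}(P\mathcal A)$, and rearranging yields the desired identity.

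There is essentially no obstacle here beyond bookkeeping: the only point requiring care is the correct identification $\ker P = \mathcal B$ (equivalently, that a vector is annihilated by the orthogonal projection onto $\mathcal B^\perp$ if and only if it lies in $\mathcal B$), which is exactly where the orthogonality built into the definition of $\mathcal B^c$ is used. As a sanity check, the same conclusion follows from the alternative route $\Text{dim}(P\mathcal A) = \Text{dim}(\mathcal A) - \Text{dim}(\mathcal A \cap \mathcal B)$, obtained by applying rank--nullity to $P|_{\mathcal A}$ whose kernel is $\mathcal A \cap \mathcal B$, combined with the standard dimension formula $\Text{dim}(\mathcal A + \mathcal B) = \Text{dim}(\mathcal A) + \Text{dim}(\mathcal B) - \Text{dim}(\mathcal A \cap \mathcal B)$.
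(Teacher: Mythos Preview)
Your argument is correct. The paper itself omits the proof of this lemma, remarking only that it ``is based on basic linear algebra,'' so there is no paper-side proof to compare against in detail. Your rank--nullity approach, applied to the restriction of the orthogonal projection $P$ onto $\mathcal B^\perp$ with domain $\mathcal A+\mathcal B$, is exactly the kind of elementary verification the paper has in mind, and the alternative route via $\dim(P\mathcal A)=\dim(\mathcal A)-\dim(\mathcal A\cap\mathcal B)$ together with the dimension formula is an equally valid sanity check.
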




Finally, the following lemma, whose proof is based on the sub-modularity of the rank function (Lemma \ref{ranksubmod}), will be useful later in the converse proof.

\begin{lemma} \label{cl1}
Suppose that for four matrices ${A}, {B}, {C}, {D}$ with the same number of rows,
\begin{align}
&    \Text{rank}[{A}]+ \Text{rank}[{B}\quad {C} \quad {D}] =  \Text{rank}[{A} \quad {B}\quad {C} \quad {D}], \nonumber \\
&    \Text{rank}[{B}]+ \Text{rank}[{A}\quad {C} \quad {D}] =  \Text{rank}[{A} \quad {B}\quad {C} \quad {D}]. \label{eq:need1}
\end{align}
Then,
$$ \Text{rank}[{A}] + \Text{rank}[  {B}]+ \Text{rank}[{C} \quad {D}] =  \Text{rank}[{A} \quad {B}\quad {C} \quad {D}]. $$
\end{lemma}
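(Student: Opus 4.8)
The plan is to apply sub-modularity of rank (Lemma \ref{ranksubmod}) twice, with the column index sets chosen so that the two hypotheses in \eqref{eq:need1} feed into each other. Throughout, think of $A,B,C,D$ as column blocks of one big matrix $M=[A\ B\ C\ D]$, and write $r(\cdot)$ for rank of the indicated sub-block. First I would rewrite the two hypotheses as statements about ``no overlap'': the first says $r(A)+r(BCD)=r(ABCD)$, i.e.\ the column space of $A$ meets that of $[B\ C\ D]$ trivially; the second says the same with $A$ and $B$ swapped. The goal is the stronger decomposition $r(A)+r(B)+r(CD)=r(ABCD)$.

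The key step is a single application of sub-modularity to the big matrix $M$ with $I_1$ = columns of $[A\ C\ D]$ and $I_2$ = columns of $[B\ C\ D]$, so that $I_1\cup I_2$ = all columns and $I_1\cap I_2$ = columns of $[C\ D]$. This gives
\begin{equation}
r(ACD)+r(BCD)\ \ge\ r(CD)+r(ABCD).\nonumber
\end{equation}
Now substitute the two hypotheses: from the first, $r(ACD)=r(ABCD)-?$ is not immediate, so instead I substitute $r(BCD)=r(ABCD)-r(A)$ (first hypothesis) and $r(ACD)=r(ABCD)-r(B)$ (second hypothesis). The inequality becomes
\begin{equation}
\bigl(r(ABCD)-r(B)\bigr)+\bigl(r(ABCD)-r(A)\bigr)\ \ge\ r(CD)+r(ABCD),\nonumber
\end{equation}
which rearranges to $r(A)+r(B)+r(CD)\le r(ABCD)$. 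The reverse inequality $r(A)+r(B)+r(CD)\ge r(ABCD)$ is the trivial sub-additivity of rank over a partition of the columns (or two applications of Lemma \ref{ranksubmod} with disjoint index sets, whose intersection contributes rank $0$). Combining the two directions gives equality, which is exactly the claim.

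I do not expect a serious obstacle here; the only thing to be careful about is the bookkeeping of which hypothesis to substitute into which term so that the $r(ABCD)$ terms combine with the correct sign — substituting for $r(ACD)$ and $r(BCD)$ (rather than trying to eliminate $r(A)$ or $r(B)$ directly) is what makes the cancellation work. One should also note that all rank identities here hold almost surely or deterministically as appropriate — since the statement is purely about fixed matrices $A,B,C,D$, no probabilistic argument is needed, and the lemma will later be invoked on realizations of the random beamforming matrices where \eqref{eq:need1} holds a.s.
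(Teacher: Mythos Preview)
Your proposal is correct and is essentially the same argument as the paper's: both apply sub-modularity once with $I_1=$ columns of $[A\ C\ D]$ and $I_2=$ columns of $[B\ C\ D]$, substitute the two hypotheses for $r(ACD)$ and $r(BCD)$, and pair the resulting inequality with trivial sub-additivity. The only cosmetic difference is the order (the paper first rewrites $r(A)+r(B)$ via the hypotheses and then applies sub-modularity, whereas you write sub-modularity first and then substitute), but the content is identical.
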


\begin{proof}
Note that $ \Text{rank}[{A}] + \Text{rank}[  {B}]+ \Text{rank}[{C} \quad {D}] \geq  \Text{rank}[{A} \quad {B}\quad {C} \quad {D}]$. Hence, in order to prove Lemma ~\ref{cl1}, we only need to prove the inequality in the other direction. Now, according to the assumptions in the Lemma, and using sub-modularity of the rank (Lemma \ref{ranksubmod}), we have
\begin{align}
\Text{rank}[{A}]+\Text{rank}[{B}] \stackrel{(\ref{eq:need1})}{=} &  \Text{rank}[{A} \quad {B}\quad {C} \quad {D}]- \Text{rank}[{B}\quad {C} \quad {D}]     +   \Text{rank}[{A} \quad {B}\quad {C} \quad {D}] -  \Text{rank}[{A}\quad {C} \quad {D}] \nonumber \\
 \stackrel{(\text{sub-modularity})}{\leq} & \Text{rank}[{A} \quad {B}\quad {C} \quad {D}]- \Text{rank}[{B}\quad {C} \quad {D}]    +   \Text{rank}[{B}\quad {C} \quad {D}] -  \Text{rank}[{C} \quad {D}] \nonumber \\
\nonumber =&  \Text{rank}[{A} \quad {B}\quad {C} \quad {D}]- \Text{rank}[ {C} \quad {D}].
\end{align}
\end{proof}

We are now ready to prove the converse.
In particular, we prove the following two inequalities:
\begin{align}
& (d_{11}+d_{12})+\frac{3}{2}(d_{21}+d_{22})\leq \frac{3}{2}\label{2r2x}\\
&\frac{3}{2}(d_{11}+d_{12})+(d_{21}+d_{22})\leq \frac{3}{2}.\label{2r1x}
\end{align}
The desired result follows from summing the above two inequalities.
By symmetry, we only need to prove (\ref{2r2x}).
Suppose $(d_{11},d_{12},d_{21},d_{22})\in \mathcal D$, i.e., there exists a sequence
$\{ f_1^{(n)},f_2^{(n)}  \}_{n=1}^{\infty}$ resulting in linearly achieving
 $\{ m_{11}(n),m_{12}(n),m_{21}(n),m_{22}(n)  \}_{n=1}^{\infty}$ with probability 1, and $d_{kj}=\lim_{n\to\infty} \frac{m_{kj}(n)}{n}$.
First, note that 
\begin{equation}  \label{equil}
\Text{dim} \left( \text{colspan} (\bold{V}_{kj}^n) \right) \stackrel{a.s.}{=} \Text{dim} \left(\text{colspan} (\bold{G}_{kj}^n\bold{V}_{kj}^n) \right),
\end{equation}
 due to the continuous distribution of $\bold{g}_{kj}(t)$ for any $t$.
Therefore, by~(\ref{equil}) and Lemma~\ref{eq}, we conclude that if (\ref{decode}) occurs with probability 1, then for
$j,k\in\{1,2\}$
and $j'=3-j, k'=3-k,$
\begin{equation}  \label{altcond}
 \Text{rank}[\bold{G}_{kj}^n\bold{V}_{k'j}^n   \quad \bold{G}_{kj'}^n\bold{V}_{kj'}^n  \quad \bold{G}_{kj'}^n\bold{V}_{k'j'}^n         ]+\Text{rank}[\bold{G}_{kj}^n\bold{V}_{kj}^n] \stackrel{a.s.}{=}\Text{rank}[\bold{G}_{k1}^n\bold{V}_{k1}^n \quad \bold{G}_{k2}^n\bold{V}_{k2}^n \quad \bold{G}_{k1}^n\bold{V}_{k'1}^n \quad \bold{G}_{k2}^n\bold{V}_{k'2}^n],
\end{equation}
Thus, we consider (\ref{altcond}) as the equivalent decodability condition, which consists of  the following four equations:
\begin{align}
& \Text{rank}[\bold{G}_{11}^n \bold{V}_{11}^n] + \Text{rank}[\bold{G}_{11}^n\bold{V}_{21}^n \quad \bold{G}_{12}^n\bold{V}_{12}^n \quad \bold{G}_{12}^n\bold{V}_{22}^n] \stackrel{a.s.}{=} \Text{rank}[\bold{G}_{11}^n\bold{V}_{11}^n \quad \bold{G}_{11}^n\bold{V}_{21}^n \quad \bold{G}_{12}^n\bold{V}_{12}^n \quad \bold{G}_{12}^n\bold{V}_{22}^n]  \label{condi1}  \\
& \Text{rank}[\bold{G}_{12}^n \bold{V}_{12}^n] + \Text{rank}[\bold{G}_{11}^n\bold{V}_{11}^n \quad \bold{G}_{11}^n\bold{V}_{21}^n  \quad \bold{G}_{12}^n\bold{V}_{22}^n]\stackrel{a.s.}{=} \Text{rank}[\bold{G}_{11}^n\bold{V}_{11}^n \quad \bold{G}_{11}^n\bold{V}_{21}^n \quad \bold{G}_{12}^n\bold{V}_{12}^n \quad \bold{G}_{12}^n\bold{V}_{22}^n] \label{condi2}   \\
& \Text{rank}[\bold{G}_{21}^n \bold{V}_{21}^n] + \Text{rank}[\bold{G}_{21}^n\bold{V}_{11}^n  \quad \bold{G}_{22}^n\bold{V}_{12}^n \quad \bold{G}_{22}^n\bold{V}_{22}^n] \stackrel{a.s.}{=} \Text{rank}[\bold{G}_{21}^n\bold{V}_{11}^n \quad \bold{G}_{21}^n\bold{V}_{21}^n \quad \bold{G}_{22}^n\bold{V}_{12}^n \quad \bold{G}_{22}^n\bold{V}_{22}^n]  \label{condi3}    \\
& \Text{rank}[\bold{G}_{22}^n \bold{V}_{22}^n] + \Text{rank}[ \bold{G}_{21}^n\bold{V}_{11}^n \quad \bold{G}_{21}^n\bold{V}_{21}^n \quad \bold{G}_{22}^n\bold{V}_{12}^n] \stackrel{a.s.}{=} \Text{rank}[\bold{G}_{21}^n\bold{V}_{11}^n \quad \bold{G}_{21}^n\bold{V}_{21}^n \quad \bold{G}_{22}^n\bold{V}_{12}^n \quad \bold{G}_{22}^n\bold{V}_{22}^n]. \label{condi4}
\end{align}

Hence, by (\ref{condi1}), (\ref{condi2}), and Lemma \ref{cl1},
\begin{equation}  \label{joint1}
\Text{rank}[\bold{G}_{11}^n \bold{V}_{11}^n] + \Text{rank}[\bold{G}_{12}^n \bold{V}_{12}^n] \stackrel{a.s.}{=} \Text{rank}[\bold{G}_{11}^n\bold{V}_{11}^n \quad \bold{G}_{11}^n\bold{V}_{21}^n \quad \bold{G}_{12}^n\bold{V}_{12}^n \quad \bold{G}_{12}^n\bold{V}_{22}^n] -\Text{rank}[\bold{G}_{11}^n\bold{V}_{21}^n \quad \bold{G}_{12}^n\bold{V}_{22}^n].
\end{equation}
In addition, by (\ref{condi3}), (\ref{condi4}), and Lemma \ref{cl1},
\begin{equation} \label{joint2}
\Text{rank}[\bold{G}_{21}^n \bold{V}_{21}^n] + \Text{rank}[\bold{G}_{22}^n \bold{V}_{22}^n] \stackrel{a.s.}{=}  \Text{rank}[\bold{G}_{21}^n\bold{V}_{11}^n \quad \bold{G}_{21}^n\bold{V}_{21}^n \quad \bold{G}_{22}^n\bold{V}_{12}^n \quad \bold{G}_{22}^n\bold{V}_{22}^n] -\Text{rank}[\bold{G}_{21}^n\bold{V}_{11}^n \quad \bold{G}_{22}^n\bold{V}_{12}^n].
\end{equation}

Therefore, we have
\begin{align}
m_{11}(n)+m_{12}(n)&+\frac{3}{2}(m_{21}(n)+m_{22}(n)) \nonumber\\
\stackrel{a.s.}{=} &\Text{rank}[\bold{V}_{11}^n] + \Text{rank}[\bold{V}_{12}^n] +\frac{3}{2} (\Text{rank}[\bold{V}_{21}^n] + \Text{rank}[\bold{V}_{22}^n]) \nonumber\\
\stackrel{a.s.}{=} &\Text{rank}[\bold{G}_{11}^n \bold{V}_{11}^n] + \Text{rank}[\bold{G}_{12}^n \bold{V}_{12}^n] +\frac{3}{2} (\Text{rank}[ \bold{G}_{21}^n  \bold{V}_{21}^n] + \Text{rank}[  \bold{G}_{22}^n  \bold{V}_{22}^n]) \nonumber\\
\substack{ \text{(\ref{joint1}), (\ref{joint2})}   \\ a.s.\\=} &  \Text{rank}[\bold{G}_{11}^n\bold{V}_{11}^n \quad \bold{G}_{12}^n\bold{V}_{12}^n \quad \bold{G}_{11}^n\bold{V}_{21}^n \quad \bold{G}_{12}^n\bold{V}_{22}^n] -\Text{rank}[\bold{G}_{11}^n\bold{V}_{21}^n \quad \bold{G}_{12}^n\bold{V}_{22}^n]  \nonumber\\
&+ \frac{3}{2} ( \Text{rank}[\bold{G}_{21}^n\bold{V}_{11}^n \quad \bold{G}_{22}^n\bold{V}_{12}^n \quad \bold{G}_{21}^n\bold{V}_{21}^n \quad \bold{G}_{22}^n\bold{V}_{22}^n] -\Text{rank}[\bold{G}_{21}^n\bold{V}_{11}^n \quad \bold{G}_{22}^n\bold{V}_{12}^n])         \nonumber\\
\stackrel{(a)}{\leq}&  \Text{rank}[\bold{G}_{11}^n\bold{V}_{11}^n \quad \bold{G}_{12}^n\bold{V}_{12}^n]  +\frac{3}{2}  \Text{rank}[\bold{G}_{21}^n\bold{V}_{11}^n \quad \bold{G}_{22}^n\bold{V}_{12}^n \quad \bold{G}_{21}^n\bold{V}_{21}^n \quad \bold{G}_{22}^n\bold{V}_{22}^n]   \nonumber\\
& -\frac{3}{2}   \Text{rank}[\bold{G}_{21}^n\bold{V}_{11}^n \quad \bold{G}_{22}^n\bold{V}_{12}^n]\nonumber \\
\substack{\text{(Lemma~\ref{lemma1})}   \\ a.s. \\ \leq}&  \frac{3}{2}  \Text{rank}[\bold{G}_{21}^n\bold{V}_{11}^n \quad \bold{G}_{22}^n\bold{V}_{12}^n \quad \bold{G}_{21}^n\bold{V}_{21}^n \quad \bold{G}_{22}^n\bold{V}_{22}^n]  \nonumber\\
 \leq & \frac{3}{2} n, \label{converse}
\end{align}
where (a) follows from the fact that
$\Text{rank}[\bold{A}\quad \bold{B}] \leq \Text{rank}[\bold{A}] + \Text{rank} [\bold{B}]$.
Therefore, by dividing both sides of the inequality in (\ref{converse}) by $n$, and letting $n\to\infty$ we get
\begin{equation}
d_{11}+d_{12}+\frac{3}{2}(d_{21}+d_{22})\leq \frac{3}{2}.
\end{equation}
Hence, the proof of converse for Theorem \ref{theorem1} is complete.
$  \blacksquare$

We will next prove Lemma~\ref{lemma1}.

\subsection{Proof of Lemma~1}\label{sec:keyLemmaProof}

Let us fix $n\in \mathbb N$, and 
consider a fixed linear coding strategy $\{  f_1^{(n)},f_2^{(n)} \}$, with corresponding $ \bold{V}_{11}^{n},\bold{V}_{12}^{n} $ as defined in (\ref{coding}). For notational simplicity in the proof, we denote $\bold{V}_{11}^n$ by $\bold{V}_{1}^n$, and $\bold{V}_{12}^n$ by $\bold{V}_{2}^n$. We first state some definitions.

\begin{definition}  \label{T} Consider a fixed linear coding strategy $\{  f_1^{(n)},f_2^{(n)} \}$, with corresponding $ \bold{V}_{1}^{n}\stackrel{\Delta}{=} \bold{V}_{11}^{n},\bold{V}_{2}^{n} \stackrel{\Delta}{=} \bold{V}_{12}^{n}$. 
Define the random set $ \bm{\mathcal T}_{\{  f_1^{(n)},f_2^{(n)} \}} (\bm{\mathcal  G^n})$ with its alphabet being the power set of $\{ 1,2,\ldots, n\}$ as follows.
For any realization of channels $\bm{\mathcal  G^n}=\mathcal G^n$, which results in  $\bold{G}_{21}^n=G_{21}^n, \bold{G}_{22}^n=G_{22}^n,\bold{G}_{11}^n=G_{11}^n, \bold{G}_{12}^n=G_{12}^n$,   and $\bold{V}_1^n=V_1^{n}, \bold{V}_2^n=V_2^{n}$, we define
\begin{equation}
 \mathcal T_{\{  f_1^{(n)},f_2^{(n)} \}}(\mathcal G^n) \triangleq  \{ t| [\vec { v}_1(t)^\top \quad \vec 0_{1\times m_2(n)}],[\vec 0_{1\times m_1(n)}\quad \vec { v}_2(t)^\top]  \in   \text{rowspan} [G_{21}^{t-1}V_1^{t-1}  \quad  G_{22}^{t-1}V_2^{t-1}]  \} .
\end{equation}
\end{definition}
In  words, $ \bm{\mathcal T}_{\{  f_1^{(n)},f_2^{(n)} \}} (\bm{\mathcal  G^n})$ represents the set of random timeslots (random due to the randomness in channels), where the beamforming vectors transmitted by the two transmitters are already individually  recoverable by  $\Text{Rx}_2$ using its received beamforming vectors in  the previous timeslots.
Since the code $\{  f_1^{(n)},f_2^{(n)} \}$ is fixed in the proof, for notational simplicity from now on we denote $ \bm{\mathcal T}_{\{  f_1^{(n)},f_2^{(n)} \}} (\bm{\mathcal  G^n})$ by $\bm{\mathcal T}$.

\begin{definition}   \label{r1} Consider a fixed linear coding strategy $\{  f_1^{(n)},f_2^{(n)} \}$, with corresponding $ \bold{V}_{1}^{n}\stackrel{\Delta}{=} \bold{V}_{11}^{n},\bold{V}_{2}^{n} \stackrel{\Delta}{=} \bold{V}_{12}^{n}$. 
Define random variables $\bold{r}_1(\bm{\mathcal{G}}^n), \bold{r}_2(\bm{\mathcal {G}}^n)  $  in $\{1,\ldots,n\}$ as follows. For any realization of channels $\bm{\mathcal  G^n}=\mathcal G^n$, which results in  $\bold{G}_{21}^n=G_{21}^n, \bold{G}_{22}^n=G_{22}^n,\bold{G}_{11}^n=G_{11}^n, \bold{G}_{12}^n=G_{12}^n$,   and $\bold{V}_1^n=V_1^{n}, \bold{V}_2^n=V_2^{n}$, define
\begin{align}
& r_i(\mathcal G^n)  \triangleq  \Text{dim} \left( \Text{span}(  \mathcal E_i(\mathcal G^n)  ) \right), \quad i=1,2,\nonumber
\end{align}
where 
\begin{align}
& \mathcal E_1(\mathcal G^n) \triangleq  \{ \vec s_{m_1(n)\times 1} |\quad \exists \vec l_{ n \times 1} \quad s.t.\quad  [\vec s^\top\quad \vec 0_{1\times m_2(n)}]=\vec l~^\top [G_{21}^nV_1^n \quad G_{22}^n V_2^n]  \}    \nonumber\\
& \mathcal E_2(\mathcal G^n) \triangleq  \{  \vec s_{ m_2(n) \times 1} |\quad \exists \vec l_{ n \times 1} \quad s.t.\quad  [\vec 0_{1\times m_1(n)} \quad \vec s^\top]=\vec l~^\top [G_{21}^nV_1^n \quad G_{22}^n V_2^n]  \}    \nonumber .
\end{align}
\end{definition}
In  words, $ \bold{r}_1(\bm{\mathcal G}^n)$ can be interpreted as the number of linearly independent equations that $\Text{Rx}_2$ can recover from its received signal, which only involve symbols of $\Text{Tx}_1$.
Hereafter, we denote $\bold{r}_1(\bm{\mathcal{G}}^n), \bold{r}_2(\bm{\mathcal {G}}^n)  $ simply by $\bold{r}_1, \bold{r}_2 $.

We will now state the following lemma, proved in Appendix~\ref{lemm2}, which is the key to proving Lemma~\ref{lemma1}.

\begin{lemma}\label{main1}
For any linear coding strategy $\{  f_1^{(n)},f_2^{(n)} \}$, with corresponding $ \bold{V}_{1}^{n}\stackrel{\Delta}{=} \bold{V}_{11}^{n},\bold{V}_{2}^{n} \stackrel{\Delta}{=} \bold{V}_{12}^{n}$  defined in (\ref{coding}),
\begin{itemize}
\item
$ \Text{rank}[\bold{G}_{11}^n\bold{V}_1^n\quad \bold{G}_{12}^n\bold{V}_2^n] - \Text{rank}[\bold{G}_{21}^n\bold{V}_1^n \quad \bold{G}_{22}^n\bold{V}_2^n]\stackrel{a.s.}{ \leq}  \Text{rank}[\bold{G}_{11}^{\bm{\mathcal T}}\bold{V}_1^{\bm{\mathcal T}}\quad \bold{G}_{12}^{\bm{\mathcal T}}\bold{V}_2^{\bm{\mathcal T}}] $

\item
$\Text{rank}[\bold{V}_j^{\bm{\mathcal T}}] \stackrel{}{\leq} \bold{r}_j, \quad j=1,2$

\item
$ \bold{r}_j \stackrel{a.s.}{\leq}   \Text{rank}[\bold{G}_{21}^n\bold{V}_1^n \quad \bold{G}_{22}^n\bold{V}_2^n]-\Text{rank}[\bold{V}_{3-j}^n] , \quad  j=1,2$
\end{itemize}
where $\bm{\mathcal T}$ is defined in Definition \ref{T}, $\bold{V}_i^{\bm{\mathcal T}}$ represents the random sub-matrix of $\bold{V}_i^n$ derived by keeping rows whose indices are in $\bm{\mathcal T}$, and $\bold{r}_1,\bold{r}_2$ are defined in Definition \ref{r1}.
\end{lemma}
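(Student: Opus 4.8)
The plan is to prove the three bullets separately; only the first needs a probabilistic argument, the other two being pure linear algebra. Throughout, write $\bold{M}_k^t \triangleq [\bold{G}_{k1}^t\bold{V}_1^t \quad \bold{G}_{k2}^t\bold{V}_2^t]$ for the submatrix formed by the first $t$ rows ($\bold{M}_k^0$ the empty matrix, $\Text{rank}[\bold{M}_k^0]=0$), and set $\Delta_k(t) \triangleq \Text{rank}[\bold{M}_k^t]-\Text{rank}[\bold{M}_k^{t-1}]\in\{0,1\}$, so that $\Text{rank}[\bold{M}_k^n]=\sum_{t=1}^n\Delta_k(t)$; note the $t$-th row of $\bold{M}_k^n$ equals $\bold{g}_{k1}(t)[\vec v_1(t)^\top\ \vec 0]+\bold{g}_{k2}(t)[\vec 0\ \vec v_2(t)^\top]$.

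For the first bullet I would establish two facts about $\Delta_2$. (i) If $t\in\bm{\mathcal T}$ then $\Delta_2(t)=0$: by the definition of $\bm{\mathcal T}$ both $[\vec v_1(t)^\top\ \vec 0]$ and $[\vec 0\ \vec v_2(t)^\top]$ lie in $\Text{rowspan}[\bold{M}_2^{t-1}]$, hence so does the $t$-th row of $\bold{M}_2^n$. (ii) Almost surely, if $t\notin\bm{\mathcal T}$ then $\Delta_2(t)=1$: conditioning on $\bm{\mathcal G}^{t-1}$ makes $[\vec v_1(t)^\top\ \vec 0]$, $[\vec 0\ \vec v_2(t)^\top]$ and $\bold{M}_2^{t-1}$ deterministic and the event $\{t\notin\bm{\mathcal T}\}$ measurable, and on that event the plane spanned by the two vectors is not contained in $\Text{rowspan}[\bold{M}_2^{t-1}]$; since $\bold{g}_{21}(t),\bold{g}_{22}(t)$ are continuously distributed and independent of $\bm{\mathcal G}^{t-1}$, the $t$-th row of $\bold{M}_2^n$ avoids $\Text{rowspan}[\bold{M}_2^{t-1}]$ outside a measure-zero set of $(\bold{g}_{21}(t),\bold{g}_{22}(t))$, and a union bound over $t$ finishes (ii). Combining, $\Text{rank}[\bold{M}_1^n]-\Text{rank}[\bold{M}_2^n]=\sum_t(\Delta_1(t)-\Delta_2(t))$: the summand is $\le 1-1=0$ for $t\notin\bm{\mathcal T}$ and equals $\Delta_1(t)$ for $t\in\bm{\mathcal T}$, so the difference is $\le\sum_{t\in\bm{\mathcal T}}\Delta_1(t)$. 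Finally, if $t\in\bm{\mathcal T}$ and the $t$-th row of $\bold{M}_1^n$ is independent of $\bold{M}_1^{t-1}$, it is a fortiori independent of the earlier $\bm{\mathcal T}$-rows, so scanning the rows of $[\bold{G}_{11}^{\bm{\mathcal T}}\bold{V}_1^{\bm{\mathcal T}}\ \bold{G}_{12}^{\bm{\mathcal T}}\bold{V}_2^{\bm{\mathcal T}}]$ in increasing order of timeslot yields $\sum_{t\in\bm{\mathcal T}}\Delta_1(t)\le\Text{rank}[\bold{G}_{11}^{\bm{\mathcal T}}\bold{V}_1^{\bm{\mathcal T}}\ \bold{G}_{12}^{\bm{\mathcal T}}\bold{V}_2^{\bm{\mathcal T}}]$, which is the first bullet.

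The second bullet is a one-line inclusion: for $t\in\bm{\mathcal T}$, $[\vec v_1(t)^\top\ \vec 0]\in\Text{rowspan}[\bold{M}_2^{t-1}]\subseteq\Text{rowspan}[\bold{M}_2^n]$, so $\vec v_1(t)\in\Text{span}(\bm{\mathcal E}_1(\bm{\mathcal G}^n))$ by definition of $\bm{\mathcal E}_1$; thus every row of $\bold{V}_1^{\bm{\mathcal T}}$ lies in $\Text{span}(\bm{\mathcal E}_1)$, giving $\Text{rank}[\bold{V}_1^{\bm{\mathcal T}}]\le\bold{r}_1$, and similarly for $j=2$. For the third bullet ($j=1$), put $R\triangleq\Text{rowspan}[\bold{M}_2^n]$ and let $\pi$ be the coordinate projection onto the last $m_2(n)$ coordinates. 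Under the natural identification, $\ker\pi\cap R$ is precisely $\Text{span}(\bm{\mathcal E}_1)$, of dimension $\bold{r}_1$, while $\pi(R)=\Text{rowspan}[\bold{G}_{22}^n\bold{V}_2^n]$, whose dimension equals $\Text{rank}[\bold{V}_2^n]$ almost surely (scaling rows of $\bold{V}_2^n$ by the a.s.\ nonzero diagonal entries of $\bold{G}_{22}^n$ preserves rank). Rank-nullity for $\pi|_R$ then gives $\Text{rank}[\bold{M}_2^n]=\bold{r}_1+\Text{rank}[\bold{V}_2^n]$ almost surely, which is an equality sharpening the stated inequality; the case $j=2$ follows by projecting onto the first $m_1(n)$ coordinates instead.

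The one delicate step---hence the main obstacle---is (ii) above. Everything else is bookkeeping, but here one must (a) cleanly isolate the \emph{fresh} randomness $\bold{g}_{21}(t),\bold{g}_{22}(t)$ from everything already determined by the delayed CSIT $\bm{\mathcal G}^{t-1}$, including both the precoding vectors and the event $\{t\notin\bm{\mathcal T}\}$ itself; (b) check that a generic linear combination of two vectors not both contained in a fixed subspace stays outside that subspace; and (c) dispatch the degenerate configurations ($\vec v_1(t)=\vec 0$, $\vec v_2(t)=\vec 0$, and the base case $t=1$ where $\bold{M}_2^0=0$) before taking the union bound. With (i), (ii) and the linear-algebra bookkeeping in hand, the three bullets of Lemma~\ref{main1} follow.
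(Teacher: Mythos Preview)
Your proof is correct. For the first two bullets your argument coincides with the paper's: your step (ii)---that $t\notin\bm{\mathcal T}$ forces $\Delta_2(t)=1$ almost surely---is exactly the content of the paper's Lemma~\ref{main5}, your telescoping bound is the same as the paper's chain~\eqref{x1}, and your ``a fortiori'' inequality $\sum_{t\in\bm{\mathcal T}}\Delta_1(t)\le\Text{rank}[\bold{G}_{11}^{\bm{\mathcal T}}\bold{V}_1^{\bm{\mathcal T}}\ \ \bold{G}_{12}^{\bm{\mathcal T}}\bold{V}_2^{\bm{\mathcal T}}]$ is the paper's Claim~\ref{cl3}; the second bullet is argued identically via the inclusion $\text{rowspan}[\bold V_1^{\bm{\mathcal T}}]\subseteq\Text{span}(\bm{\mathcal E}_1)$. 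For the third bullet you take a genuinely different route: the paper builds a rank-$\bold r_1$ matrix $L$ with $L\bold{G}_{22}^n\bold{V}_2^n=0$ and invokes Frobenius's inequality (Claim~\ref{clfrob}) to obtain only the stated upper bound, whereas your rank--nullity argument for the coordinate projection $\pi|_R$ yields the exact identity $\Text{rank}[\bold{G}_{21}^n\bold{V}_1^n\ \ \bold{G}_{22}^n\bold{V}_2^n]=\bold r_1+\Text{rank}[\bold{G}_{22}^n\bold{V}_2^n]$, hence equality (not just $\le$) in the third bullet almost surely. Your approach here is both shorter and strictly sharper than the paper's.
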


\begin{remark}
Note that the first inequality in the above lemma intuitively implies that, in order to bound the difference of the dimensions of received linear subspaces at the two receivers, we only needs to focus on the timeslots in which $\Text{Rx}_2$ already knows \underline{both} of the individual transmit equations.
\end{remark}

%

We are now ready to prove Lemma~\ref{lemma1}.
We will first use Lemma \ref{main1} to find an upper bound on the difference between
$\Text{rank}[\bold{G}_{11}^n\bold{V}_1^n\quad \bold{G}_{12}^n\bold{V}_2^n]$
and $\Text{rank}[\bold{G}_{21}^n\bold{V}_1^n \quad \bold{G}_{22}^n\bold{V}_2^n]$.
\begin{align*}
& \Text{rank}[\bold{G}_{11}^n\bold{V}_1^n\quad \bold{G}_{12}^n\bold{V}_2^n] - \Text{rank}[\bold{G}_{21}^n\bold{V}_1^n \quad \bold{G}_{22}^n\bold{V}_2^n]\substack{ (\text{Lemma \ref{main1}})   \\ a.s.\\ \leq}  \Text{rank}[\bold{G}_{11}^{\bm{\mathcal T}}\bold{V}_1^{\bm{\mathcal T}}\quad \bold{G}_{12}^{\bm{\mathcal T}}\bold{V}_2^{\bm{\mathcal T}}]\\
& \leq  \Text{rank}[\bold{G}_{11}^{\bm{\mathcal T}}\bold{V}_1^{\bm{\mathcal T}}]+\Text{rank}[\bold{G}_{12}^{\bm{\mathcal T}} \bold{V}_2^{\bm{\mathcal T}}] \stackrel{a.s.}{=}\Text{rank}[\bold{V}_1^{\bm{\mathcal T}}]+\Text{rank}[\bold{V}_2^{\bm{\mathcal T}}] \\
& \stackrel{(\text{Lemma } \ref{main1})}{\leq }  \bold{r}_1 + \bold{r}_2 \\
& \substack{ (\text{Lemma \ref{main1}})   \\ a.s.\\ \leq} \Text{rank}[\bold{G}_{21}^n\bold{V}_1^n \quad \bold{G}_{22}^n\bold{V}_2^n]-\Text{rank}[\bold{V}_2^n]+  \Text{rank}[\bold{G}_{21}^n\bold{V}_1^n \quad \bold{G}_{22}^n\bold{V}_2^n]-\Text{rank}[\bold{V}_1^n] \\
& \substack{  a.s.\\ =} 2\Text{rank}[\bold{G}_{21}^n\bold{V}_1^n \quad \bold{G}_{22}^n\bold{V}_2^n]-\Text{rank}[\bold{G}_{11}^n\bold{V}_1^n]  -  \Text{rank}[\bold{G}_{12}^n\bold{V}_2^n]  \\
& \substack{  a.s.\\ \leq} 2 \Text{rank}[\bold{G}_{21}^n\bold{V}_1^n \quad \bold{G}_{22}^n\bold{V}_2^n]-\Text{rank}[\bold{G}_{11}^n\bold{V}_1^n\quad \bold{G}_{12}^n\bold{V}_2^n].
\end{align*}
By rearranging the two sides of the above inequality,  the proof of Lemma~\ref{lemma1} would be complete.


\section{The Three-User Interference Channel with Delayed CSIT}  
~\label{threeuser}

In this section we give an example that shows how Lemma~\ref{lemma1} can be useful for deriving outer bounds in other scenarios. In particular, we
utilize Lemma~\ref{lemma1} to provide a new outer bound on the three-user interference channel with delayed CSIT depicted in Fig.~\ref{3IC}. 
\begin{figure}[h!]
\centering
\includegraphics[scale=.3, trim= 10mm 10mm 10mm 10mm]{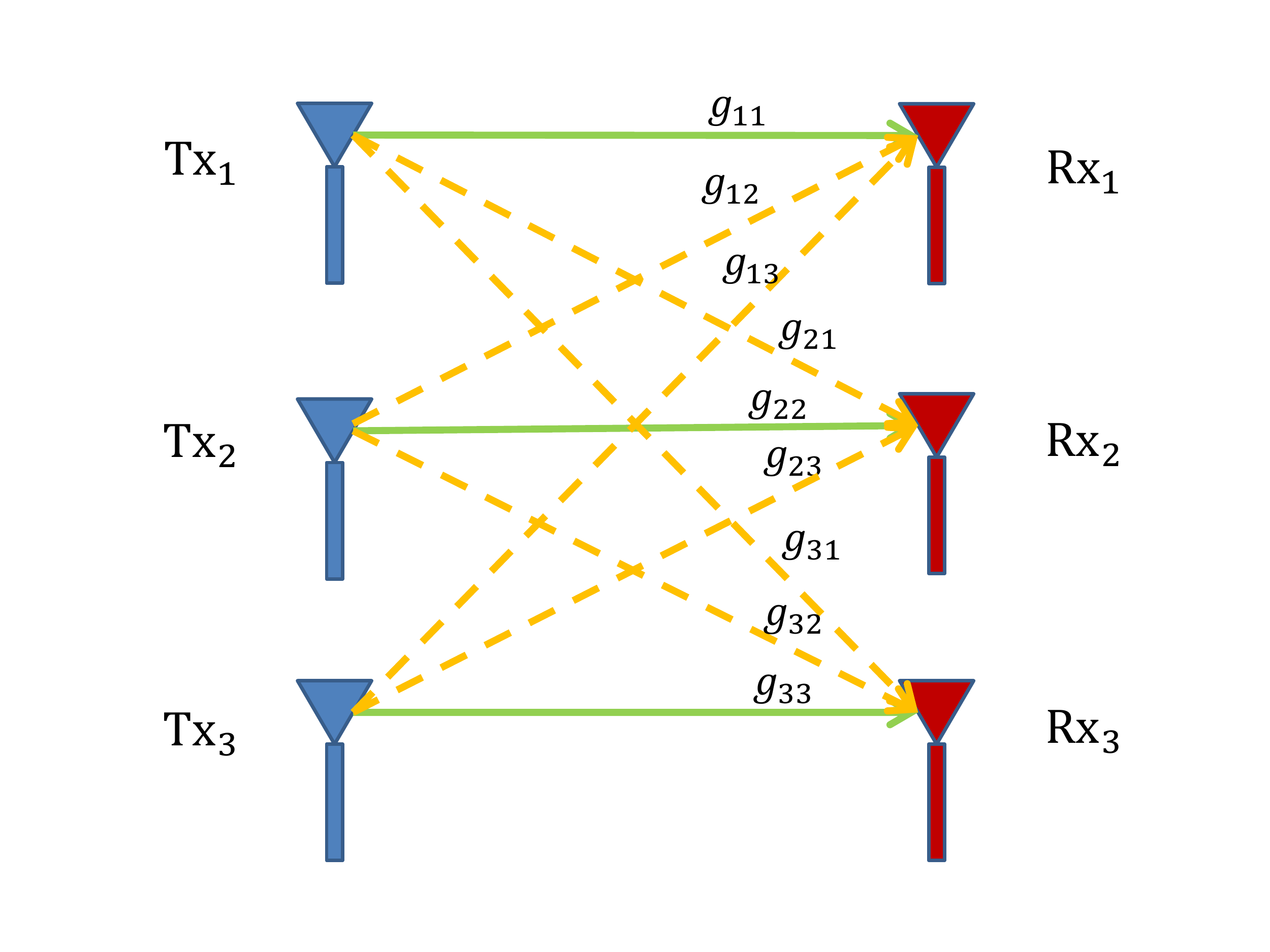}\\
\caption{Network configuration for the three-user IC.
There are three transmitters and three receivers, and for $j=1,2,3$, $\text{Tx}_j$ has  message for $\text{Rx}_j$. We assume time-varying channels, with delayed CSIT.
}\label{3IC}
\end{figure}
The channel model is similar to that of the X-channel except the channel input-output relation and decodability constraints. The received signal at $\text{Rx}_{k}$ ($k\in \{1,2,3\}$) at time $t$ is given by 
\begin{equation}
\bold{y}_{k}(t)=\sum_{j=1}^{3} \bold{g}_{kj}(t)\bold{x}_{j}(t)+\bold{z}_{k}(t).
\end{equation}

For block length of $n$ and $j=1,2,3$, we consider the decodability constraint of
\begin{equation}
\Text{dim} \left( \text{Proj}_{\bm{\mathcal I}_j^c} \text{colspan} (\bold{G}_{jj}^n \bold{V}_{j}^n) \right) =\Text{dim} \left( \text{colspan} (\bold{V}_{j}^n) \right) =m_j(n),
\label{decode3user}
\end{equation}
where $\bm{\mathcal I}_j= \cup_{i\ne j} \text{colspan} (\bold{G}_{ji}^n \bold{V}_{i}^n)$.
Denote the linear degrees of freedom  region $\mathcal D_{\text{3UserIC}}$ as the closure of the set of all achievable 3-tuples $(d_1,d_2,d_3)$, where $d_{j}=\lim_{n\to\infty} \frac{m_j(n)}{n}$, and $\{  m_1{(n)},m_2{(n)},m_3{(n)} \}$ are linearly achievable with probability $1$ for every $n\in\mathbb N$.
We are interested in characterizing the sum linear degrees of freedom:
\begin{equation}
\Text{DoF}_{\textrm{L-sum}}=\max\sum_{j=1}^{3}d_j,\qquad s.t.\quad (d_1,d_2,d_3)\in\mathcal D.
\end{equation}

With delayed CSIT, it was shown in \cite{retrospective} that the sum DoF of $\frac{9}{8}$ can be achieved, which was later improved to $\frac{36}{31}$ in \cite{abdoli}.
However, the best known outer bound so far is $\frac{3}{2}$, which also holds for the case of instantaneous CSIT~\cite{alignment}. The following theorem provides a tighter bound on the linear degrees of freedom.

\begin{theorem}
\label{theorem2}
For the three-user interference channel with delayed CSIT,
\begin{equation}
\Text{DoF}_{\Text{L-sum}} \leq \frac{9}{7}.
\end{equation}
\end{theorem}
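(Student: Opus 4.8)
The strategy is to mimic the $X$-channel converse: combine the per-receiver decodability conditions into a weighted sum of message dimensions, and then invoke the Rank Ratio Inequality (Lemma~\ref{lemma1}) on appropriately chosen sub-networks. The crucial observation is that Lemma~\ref{lemma1} is stated for \emph{any} two-transmitter, two-receiver (sub-)network with no decodability assumption, so for the three-user IC I can apply it to each pair of transmitters/receivers. Concretely, for receivers $k,k'$ and the two transmitters $j,j'$ (with $j,j',k,k'$ chosen so that $\bold{V}_j^n,\bold{V}_{j'}^n$ feed into both), I expect an inequality of the form
\begin{equation*}
\Text{rank}[\bold{G}_{kj}^n\bold{V}_j^n \quad \bold{G}_{kj'}^n\bold{V}_{j'}^n] \stackrel{a.s.}{\leq} \tfrac{3}{2}\,\Text{rank}[\bold{G}_{k'j}^n\bold{V}_j^n \quad \bold{G}_{k'j'}^n\bold{V}_{j'}^n].
\end{equation*}
Writing this down for all the relevant ordered pairs of receivers (six of them, or the symmetric subset actually needed) gives a cyclic system of rank inequalities among the nine received-subspace dimensions $\Text{rank}[\bold{G}_{ki}^n\bold{V}_i^n]$ and the pairwise joins.

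**Key steps.** First I would reformulate the decodability condition \eqref{decode3user} via Lemma~\ref{eq} and \eqref{equil} into the rank-equality form $\Text{rank}[\bold{G}_{jj}^n\bold{V}_j^n] + \Text{rank}[\bold{G}_{ji}^n\bold{V}_i^n : i\neq j] \stackrel{a.s.}{=} \Text{rank}[\bold{G}_{ji}^n\bold{V}_i^n : \text{all } i]$, i.e., each desired beam is disjoint from the interference at its own receiver. Second, as in the $X$-channel proof I would apply submodularity (Lemma~\ref{ranksubmod}, and a three-matrix analogue of Lemma~\ref{cl1}) to "peel off" individual $\Text{rank}[\bold{V}_j^n]=m_j(n)$ terms, bounding $\sum_j m_j(n)$ (with suitable positive weights $\alpha_j$) by a combination of received-subspace ranks at the three receivers plus negative correction terms. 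Third, I would apply Lemma~\ref{lemma1} to the three sub-networks $\{(\Text{Tx}_j,\Text{Tx}_{j+1}) \to (\Text{Rx}_j,\Text{Rx}_{j+1})\}$ cyclically; the $\tfrac32$ factors compound around the 3-cycle to yield $(\tfrac32)^3 = \tfrac{27}{8}$ type relations, which, balanced against the trivial bound $\Text{rank}[\cdots]\le n$ at each receiver, should collapse to $\sum_j d_j \le \tfrac{9}{7}$. The arithmetic check: if one seeks the tightest symmetric bound of the form $\beta(d_1+d_2+d_3)\le \beta\cdot\frac{?}{?}$ coming from a chain $\Text{rank}_1 \le \tfrac32 \Text{rank}_2 \le \tfrac32\Text{rank}_3 \le \tfrac32 \Text{rank}_1$ combined with the decodability deficits, one indeed lands on $\tfrac{9}{7}$, so I would set up the weights $\alpha_j$ to make the telescoping exact.

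**Main obstacle.** The hard part will be the bookkeeping in step two and three: unlike the $X$-channel, here there is no symmetry that lets me reduce to a single inequality by relabeling, because the three applications of Lemma~\ref{lemma1} involve three \emph{different} pairs of transmit matrices, and the negative correction terms (the $-\Text{rank}[\text{interference joins}]$ pieces) from the decodability step must be matched up against the $-\tfrac32\Text{rank}[\cdots]$ pieces produced by Lemma~\ref{lemma1} on each sub-network. Making these cancel — or at least bounding the leftover non-negatively — requires choosing the right sub-networks to feed into Lemma~\ref{lemma1} and the right linear combination of the three cyclic inequalities; getting a clean $\tfrac{9}{7}$ rather than a looser constant hinges entirely on that alignment. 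A secondary subtlety is that Lemma~\ref{lemma1} as stated fixes which matrices play the role of $\bold{V}_{11}^n,\bold{V}_{12}^n$ (the ones whose precoders must obey the delayed-CSIT causality), so I must verify that in each sub-network the transmit matrices I plug in genuinely satisfy \eqref{coding}, which they do since every $\bold{V}_i^n$ in the three-user IC is causal in $\bm{\mathcal G}^{t-1}$.
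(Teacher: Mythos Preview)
Your plan is considerably more convoluted than what is needed, and it rests on a misconception. You assert that ``there is no symmetry that lets me reduce to a single inequality by relabeling,'' but the three-user IC is fully symmetric under cyclic permutation of indices, so $\Text{DoF}_{\Text{L-sum}}=3\,\Text{DoF}_{\Text{L-sym}}$ and it suffices to bound the symmetric point $m_1(n)=m_2(n)=m_3(n)$. The paper exploits exactly this and needs only \emph{one} application of Lemma~\ref{lemma1}: from decodability at $\Text{Rx}_2$ together with submodularity one obtains $\Text{rank}[\bold{G}_{22}^n\bold{V}_2^n\ \ \bold{G}_{23}^n\bold{V}_3^n]\stackrel{a.s.}{=}2m_1(n)$; Lemma~\ref{lemma1} applied to the sub-network $(\Text{Tx}_2,\Text{Tx}_3)\to(\Text{Rx}_2,\Text{Rx}_1)$ then forces the interference rank $\Text{rank}[\bold{G}_{12}^n\bold{V}_2^n\ \ \bold{G}_{13}^n\bold{V}_3^n]$ at $\Text{Rx}_1$ to be at least $\tfrac{4}{3}m_1(n)$; and decodability at $\Text{Rx}_1$ gives $m_1(n)\le n-\tfrac{4}{3}m_1(n)$, i.e.\ $m_1(n)\le\tfrac{3}{7}n$.

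Two specific problems with your sketch as written. First, the ``chain'' $\Text{rank}_1\le\tfrac32\Text{rank}_2\le(\tfrac32)^2\Text{rank}_3\le(\tfrac32)^3\Text{rank}_1$ is circular and collapses to the tautology $\Text{rank}_1\le\tfrac{27}{8}\Text{rank}_1$; the number $27/8$ plays no role in any correct argument. If you really want to avoid the symmetry reduction, the correct structure is three \emph{parallel} (not chained) applications of Lemma~\ref{lemma1}, each yielding $3d_k+2d_{k+1}+2d_{k+2}\le 3$, which sum to $7\sum_k d_k\le 9$. Second, your proposed sub-networks $(\Text{Tx}_j,\Text{Tx}_{j+1})\to(\Text{Rx}_j,\Text{Rx}_{j+1})$ are not the ones that mesh with the decodability constraints: the interference at $\Text{Rx}_k$ is spanned by the two \emph{non-desired} transmitters $\Text{Tx}_{k+1},\Text{Tx}_{k+2}$, so the useful instance of Lemma~\ref{lemma1} must compare the image of that interferer pair at $\Text{Rx}_k$ with its image at a receiver where one of them is desired --- not the mix of desired-plus-one-interferer that your pairing produces.
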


\begin{proof}
Let us denote the symmetric degrees of freedom for three-user interference channel by $\Text{DoF}_{\textrm{L-sym}}$.
Note that due to symmetry of topology,
\begin{equation}
\Text{DoF}_{\textrm{L-sum}}=    3\times \Text{DoF}_{\textrm{L-sym}}.
\end{equation}
Hence, in order to prove the theorem it suffices to show that $\Text{DoF}_{\textrm{L-sym}} \leq \frac{3}{7}$.
So assume that for a given block length $n$, $m_1(n)=m_2(n)=m_3(n)$, and we seek to show that
if decodability is accomplished with probability 1, we should have $m_1(n)\stackrel{}{\leq} \frac{3}{7}n$.
By Lemma \ref{eq} if the decodability constraints in (\ref{decode3user}) are satisfied with probability 1 for pairs $\text{Tx}_1$-$\text{Rx}_1$ and $\text{Tx}_2$-$\text{Rx}_2$, then
\begin{align}
& \Text{rank}[ \bold{G}_{12}^n\bold{V}_2^n \quad   \bold{G}_{13}^n\bold{V}_3^n]+\Text{rank}[\bold{G}_{11}^n\bold{V}_1^n]\stackrel{a.s.}{=}\Text{rank}[\bold{G}_{11}^n\bold{V}_1^n \quad \bold{G}_{12}^n\bold{V}_2^n \quad \bold{G}_{13}^n\bold{V}_3^n ],   \label{do3user}\\
& \Text{rank}[ \bold{G}_{21}^n\bold{V}_1^n \quad   \bold{G}_{23}^n\bold{V}_3^n]+\Text{rank}[\bold{G}_{22}^n\bold{V}_2^n]\stackrel{a.s.}{=}\Text{rank}[\bold{G}_{21}^n\bold{V}_1^n \quad \bold{G}_{22}^n\bold{V}_2^n \quad \bold{G}_{23}^n\bold{V}_3^n ], \label{do3user0}
\end{align}
where $\Text{rank}[\bold{V}_1^n] \stackrel{a.s.}{=} \Text{rank}[\bold{V}_2^n]  \stackrel{a.s.}{=}  \Text{rank}[\bold{V}_3^n] \stackrel{a.s.}{=}  m_1(n)$.
Thus, assuming $m_1(n)=m_2(n)=m_3(n)$ are linearly achievable with probability 1,
 from (\ref{do3user0}), we have
\begin{align}
 \Text{rank}[\bold{G}_{22}^n\bold{V}_2^n] & \stackrel{a.s.}{=} \Text{rank}[\bold{G}_{21}^n\bold{V}_1^n \quad \bold{G}_{22}^n\bold{V}_2^n \quad \bold{G}_{23}^n\bold{V}_3^n ] - \Text{rank}[ \bold{G}_{21}^n\bold{V}_1^n \quad   \bold{G}_{23}^n\bold{V}_3^n] \nonumber \\
&  \stackrel{(a)}{ \leq}   \Text{rank}[\bold{G}_{22}^n\bold{V}_2^n \quad \bold{G}_{23}^n\bold{V}_3^n ] - \Text{rank}[    \bold{G}_{23}^n\bold{V}_3^n],  \label{do3user2}
\end{align}
where (a) follows from sub-modularity of rank (Lemma \ref{ranksubmod}).
In addition, we know that
\begin{align}
& \Text{rank}[\bold{G}_{22}^n\bold{V}_2^n] \geq   \Text{rank}[\bold{G}_{22}^n\bold{V}_2^n \quad \bold{G}_{23}^n\bold{V}_3^n ] - \Text{rank}[    \bold{G}_{23}^n\bold{V}_3^n].  \label{do3user22}
\end{align}
By (\ref{do3user2}), (\ref{do3user22}) we conclude that
\begin{align}
& \Text{rank}[\bold{G}_{22}^n\bold{V}_2^n \quad \bold{G}_{23}^n\bold{V}_3^n] \stackrel{a.s.}{=} \Text{rank}[ \bold{G}_{22}^n\bold{V}_2^n ]+\Text{rank}[\bold{G}_{23}^n\bold{V}_3^n]\stackrel{a.s.}{=} \Text{rank}[ \bold{V}_2^n ]+\Text{rank}[\bold{V}_3^n] \stackrel{a.s.}{=} 2m_1(n) .   \label{do3user3}
\end{align}
On the other hand, from Lemma~\ref{lemma1} we know that
\begin{equation}
\Text{rank}[\bold{G}_{22}^n\bold{V}_2^n \quad \bold{G}_{23}^n\bold{V}_3^n] \stackrel{a.s.}{\leq} \frac{3}{2}   \Text{rank}[ \bold{G}_{12}^n\bold{V}_2^n \quad   \bold{G}_{13}^n\bold{V}_3^n] .\label{do3user4}
\end{equation}
Hence, by (\ref{do3user3}), (\ref{do3user4}),
\begin{equation}
  \Text{rank}[ \bold{G}_{12}^n\bold{V}_2^n \quad   \bold{G}_{13}^n\bold{V}_3^n]  \stackrel{a.s.}{\geq} \frac{4}{3} m_1(n).\label{do3user5}
\end{equation}

Finally,  by considering (\ref{do3user}), (\ref{do3user5}), and the fact that
$\Text{rank}[\bold{G}_{11}^n\bold{V}_1^n \quad \bold{G}_{12}^n\bold{V}_2^n \quad \bold{G}_{13}^n\bold{V}_3^n ] \leq n$,
we get
\begin{eqnarray*}
m_1(n) \stackrel{a.s.}{=} \Text{rank}[\bold{V}_1]  \stackrel{a.s.}{=} \Text{rank}[\bold{G}_{11}^n\bold{V}_1^n] &    \substack{ \text{(\ref{do3user})} \\ a.s  \\ =} & \Text{rank}[\bold{G}_{11}^n\bold{V}_1^n \quad \bold{G}_{12}^n\bold{V}_2^n \quad \bold{G}_{13}^n\bold{V}_3^n ] - \Text{rank}[ \bold{G}_{12}^n\bold{V}_2^n \quad   \bold{G}_{13}^n\bold{V}_3^n] \\
&   \substack{ \text{(\ref{do3user5})} \\  a.s.  \\ \leq}  &  n - \frac{4}{3} m_1(n),
\end{eqnarray*}
which implies that $m_1(n) \stackrel{}{\leq} \frac{3}{7} n$ because $n, m_1(n)$ are non-random, and this completes the proof.
\end{proof}

\section{Concluding Remarks and Future Directions} 
\label{discussion}

In this paper, we characterized the linear degrees of freedom of the $X$-channel with delayed CSIT. Our main contribution was the development of a general lemma that shows that, if two distributed transmitters employ linear strategies, the ratio of  the dimensions of received linear subspaces at the two receivers cannot exceed $\frac{3}{2}$, due to lack of instantaneous  knowledge of the channels. We also applied this general lemma to the three-user interference channel with delayed CSIT, thereby deriving a new upper bound of $\frac{9}{7}$ on its linear degrees of freedom. 

We conjecture that the total degrees of freedom of the $X$-channel with delayed CSIT (without restriction to linear schemes) is also $\frac{6}{5}$. In fact, we conjecture the following generalization of Lemma~\ref{lemma1} for general encoding strategies.

\begin{conjecture}
\label{conj}
Consider the  2-transmitter 2-receiver network  setting of Lemma~\ref{lemma1}. For any $n\in \mathbb N$ and any coding strategy denoted by encoding functions $\{  f_1^{(n)},f_2^{(n)} \}$, and its corresponding received signals,  $\vec {\bf y}_1^n$ and $\vec {\bf y}_2^n$, we have
\begin{equation}
\label{ineqConj}
h(\vec {\bf y}_1^n| \bm{\mathcal {G}}^n) \leq \frac{3}{2} h(\vec {\bf y}_2^n | \bm{\mathcal {G}}^n) +n\times o(\log(P)).\end{equation}
\end{conjecture}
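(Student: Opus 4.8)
The natural route is to lift the proof of Lemma~\ref{lemma1} from linear algebra to information theory, replacing the rank of a received beamforming matrix by the conditional differential entropy $h(\,\cdot\mid\bm{\mathcal G}^n)$, with $\log P$ playing the role of one unit of dimension and everything measured modulo $n\times o(\log P)$. First I would expand $h(\vec{\bf y}_1^n\mid\bm{\mathcal G}^n)=\sum_{t=1}^{n} h(\bold{y}_1(t)\mid\vec{\bf y}_1^{t-1},\bm{\mathcal G}^n)$, and likewise for $\vec{\bf y}_2^n$, and compare the two sums slot by slot. The basic tool is a \emph{statistical-equivalence} identity for i.i.d.\ fading with delayed CSIT: since the time-$t$ inputs $\bold{x}_1(t),\bold{x}_2(t)$ depend only on the messages and on $\bm{\mathcal G}^{t-1}$, they are independent of $\bm{\mathcal G}(t)$, so conditioned on any common past the signals $\bold{y}_1(t)$ and $\bold{y}_2(t)$ are statistically identical, e.g.\ $h(\bold{y}_1(t)\mid\vec{\bf y}_1^{t-1},\bm{\mathcal G}^{t-1})=h(\bold{y}_2(t)\mid\vec{\bf y}_1^{t-1},\bm{\mathcal G}^{t-1})$. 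This is the entropic shadow of the fact, used throughout Section~\ref{sec:keyLemmaProof}, that a fresh reception at $\Text{Rx}_1$ looks like a fresh reception at $\Text{Rx}_2$.

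Next I would introduce the information-theoretic counterpart of the random set $\bm{\mathcal T}$ of Definition~\ref{T}: let $\bm{\mathcal T}$ be the random set of timeslots $t$ at which $\Text{Rx}_2$ can reconstruct \emph{both} $\bold{x}_1(t)$ and $\bold{x}_2(t)$ from $(\vec{\bf y}_2^{t-1},\bm{\mathcal G}^n)$ down to the noise floor --- say, the set of $t$ with $I(\bold{x}_j(t);W\mid\vec{\bf y}_2^{t-1},\bm{\mathcal G}^n)=o(\log P)$ for $j=1,2$, where $W$ denotes all messages. The goal is then the three entropic analogues of Lemma~\ref{main1}: \emph{(i)} $h(\vec{\bf y}_1^n\mid\bm{\mathcal G}^n)-h(\vec{\bf y}_2^n\mid\bm{\mathcal G}^n)\le \sum_{t\in\bm{\mathcal T}} h(\bold{y}_1(t)\mid\vec{\bf y}_1^{t-1},\bm{\mathcal G}^n)+n\cdot o(\log P)$, obtained by combining the statistical-equivalence identity with the observation that for $t\notin\bm{\mathcal T}$ whatever $\Text{Rx}_1$ genuinely newly learns is also available at $\Text{Rx}_2$; \emph{(ii)} $\sum_{t\in\bm{\mathcal T}} h(\bold{y}_1(t)\mid\vec{\bf y}_1^{t-1},\bm{\mathcal G}^n)\le \bold{r}_1+\bold{r}_2+n\cdot o(\log P)$, where $\bold{r}_j$ is now the number of DoF worth of $\Text{Tx}_j$-only information that $\Text{Rx}_2$ can extract; and \emph{(iii)} $\bold{r}_j\le h(\vec{\bf y}_2^n\mid\bm{\mathcal G}^n)-\beta_{3-j}+n\cdot o(\log P)$, where $\beta_{3-j}$ is a lower bound on the differential entropy that $\Text{Tx}_{3-j}$'s transmitted block necessarily induces at $\Text{Rx}_2$, and the only structural input is the conditional independence $\bold{x}_1^n\perp\bold{x}_2^n\mid\bm{\mathcal G}^n$ (the entropic analogue of ``distributed transmitters'', and no decodability at $\Text{Rx}_2$ is used). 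Chaining \emph{(i)}--\emph{(iii)} exactly as in the display that ends Section~\ref{sec:keyLemmaProof}, with subadditivity and the chain rule of entropy in the role of $\Text{rank}[A\ B]\le\Text{rank}[A]+\Text{rank}[B]$ and of Lemma~\ref{ranksubmod}, would give $h(\vec{\bf y}_1^n\mid\bm{\mathcal G}^n)-h(\vec{\bf y}_2^n\mid\bm{\mathcal G}^n)\le 2h(\vec{\bf y}_2^n\mid\bm{\mathcal G}^n)-h(\vec{\bf y}_1^n\mid\bm{\mathcal G}^n)+n\cdot o(\log P)$, which rearranges to~\eqref{ineqConj}.

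The difficulty is that none of \emph{(i)}--\emph{(iii)} retains the crisp finite-dimensional content it had under linear coding. There, ``$\Text{Rx}_2$ recovers $\vec v_j(t)$'' is the algebraic condition that $\vec v_j(t)$ lies in the rowspan of past receptions, so $\bm{\mathcal T}$ is a deterministic function of $\bm{\mathcal G}^n$ and the inequalities hold almost surely with no slack; here ``recovers down to the noise floor'' is only an order-of-magnitude statement, $\bm{\mathcal T}$ is a fuzzy random object, and one must generate and then sum $n$ distinct $o(\log P)$ errors and argue the total is still $n\cdot o(\log P)$ uniformly over schemes. Step \emph{(i)} is subtler than its linear twin as well: outside $\bm{\mathcal T}$ one must show, via a genie-aided coupling (handing $\Text{Rx}_1$ the inputs that $\Text{Rx}_2$ already knows) rather than a dimension count, that the residual uncertainty in $\bold{y}_1(t)$ contributes no more entropy than that in $\bold{y}_2(t)$. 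I expect the main obstacle to be \emph{(iii)}: under linearity submodularity of rank lets one trade a reception at $\Text{Rx}_2$ against the precoding dimension of the other transmitter exactly, whereas the entropic substitute must turn ``$\Text{Rx}_2$ extracts $\bold{r}_j$ DoF about $\Text{Tx}_j$'' into a genuine lower bound on $h(\vec{\bf y}_2^n\mid\bm{\mathcal G}^n)$ through noisy, possibly non-Gaussian superpositions, using only conditional independence of the two inputs together with worst-case-noise / entropy-power estimates. It is not clear this passage is lossless --- i.e.\ that it preserves the constant $\frac{3}{2}$ --- and this, I believe, is why the statement is offered only as Conjecture~\ref{conj}.
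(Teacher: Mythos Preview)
The paper does not prove this statement: it is explicitly labeled as Conjecture~\ref{conj} in Section~\ref{discussion} and is offered as a future direction, not a result. There is therefore no proof in the paper to compare your proposal against.

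Your write-up reflects this correctly. You sketch the natural program --- replace rank by conditional differential entropy, replay the three parts of Lemma~\ref{main1} in entropic form, and chain them as at the end of Section~\ref{sec:keyLemmaProof} --- and you identify the genuine obstructions: the set $\bm{\mathcal T}$ loses its sharp algebraic definition, the per-slot $o(\log P)$ errors must be controlled uniformly over schemes, and most seriously, the entropic analogue of part~(iii) of Lemma~\ref{main1} (trading $\bold r_j$ against the entropy contribution of the other transmitter) has no obvious lossless substitute for the rank/submodularity argument. Your closing sentence is exactly right: these gaps are why the authors state~\eqref{ineqConj} as a conjecture rather than a theorem. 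As a proof, the proposal is incomplete by design; as a diagnosis of what would be needed and where it might fail, it is accurate.
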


Therefore, a future direction would be to remove the linearity restriction on the encoding schemes, and prove (or disprove) the above conjecture, which (if true) will lead to the  DoF characterization of the $X$-channel with delayed CSIT.

We also believe that similar techniques could be applied to other important network configurations to gain insight on how delayed CSIT can be used to improve the Degrees of Freedom, and what the limitations on this DoF improvement are. In particular the $K$-user interference channel and multi-hop interference networks (e.g.,~\cite{SAKKK,AAMultiHopC,AAMultiHopJ}), in which there is a large gap between the state-of-the-art inner and outer bounds on DoF with delayed CSIT, can be considered.

Most research so far has focused on understanding the impact of delayed CSIT via a coarse DoF analysis. In the context of linear schemes, this  can be viewed as understanding the impact of delayed CSIT on the \emph{dimension} of desired signal spaces at the receivers of a wireless networks. While such analysis provides a first-order understanding of the impact of delayed CSIT on capacity, it is of great value to refine the analysis and study the impact of delayed CSIT on the \emph{volume} of desired signal spaces at the receivers. A first step along this direction has been taken in~\cite{MISOBCAllerton} to approximate the capacity of MISO BC with delayed CSIT to within 1 bit/sec/Hz.

\bibliographystyle{IEEEbib}
\bibliography{bib_delayedCSIT}

\begin{appendices}

\section{Proof of Lemma \ref{main1}} \label{lemm2}

\subsection{Proof of $ \Text{rank}[\bold{G}_{11}^n\bold{V}_1^n\quad \bold{G}_{12}^n\bold{V}_2^n] - \Text{rank}[\bold{G}_{21}^n\bold{V}_1^n \quad \bold{G}_{22}^n\bold{V}_2^n]\stackrel{a.s.}{ \leq}  \Text{rank}[\bold{G}_{11}^{\bm{\mathcal T}}\bold{V}_1^{\bm{\mathcal T}}\quad \bold{G}_{12}^{\bm{\mathcal T}}\bold{V}_2^{\bm{\mathcal T}}]$:}

For a fixed
linear coding strategy $\{  f_1^{(n)},f_2^{(n)} \}$, with corresponding $ \bold{V}_1^{n},\bold{V}_2^{n} $,
let $\mathcal  A_i,\mathcal B_i,\mathcal C_i$, $i=1,2,\ldots, n$, denote the following sets:
\begin{itemize}
\item $\mathcal A_i\triangleq \{ \mathcal G^n| \quad \Text{rank}[G_{21}^{i}V_1^{i} \quad G_{22}^{i}V_2^{i}]=\Text{rank}[G_{21}^{i-1}V_1^{i-1} \quad G_{22}^{i-1}V_2^{i-1}]  \} . $
\item $\mathcal B_i\triangleq \{ \mathcal G^n|  \quad [\vec v_1(i)^\top \quad \vec 0_{1\times m_2(n)}],[\vec 0_{1\times m_1(n)}\quad \vec v_2(i)^\top]  \in   \text{rowspan} [G_{21}^{i-1}V_1^{i-1}  \quad  G_{22}^{i-1}V_2^{i-1}] \}  $.
\item $\mathcal C_i\triangleq \{ \mathcal G^n|  \quad  \Text{rank}[G_{11}^i V_1^i\quad G_{12}^iV_2^i] =\Text{rank}[G_{11}^{i-1}V_1^{i-1}\quad G_{12}^{i-1}V_2^{i-1}]+1 \}$,
\end{itemize}
Note that $\mathcal B_i$ is equivalent to  $\{ \mathcal G^n|  \quad  i\in \mathcal T( \mathcal G^n) \}$.
In order to prove Lemma \ref{main1} we first state the following lemma, whose proof is postponed to Appendix \ref{lemm6}.

\begin{lemma}\label{main5}
\begin{equation}
\Pr (\bm{\mathcal G}^n \in \cup_{i=1}^{n}\mathcal (\mathcal A_i \cap \mathcal B_i^c))=0.
\end{equation}
\end{lemma}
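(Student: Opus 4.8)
\textbf{Proof proposal for Lemma~\ref{main5}.}

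The plan is to show that for each fixed $i$, the event $\mathcal A_i \cap \mathcal B_i^c$ has probability zero, and then conclude by a union bound over the finitely many indices $i = 1,\ldots,n$. So fix $i$ and condition on the past channel realization $\mathcal G^{i-1}$; by the delayed-CSIT constraint \eqref{coding}, all the relevant beamforming data up to time $i$ --- namely $V_1^{i}, V_2^{i}$, hence the vectors $\vec v_1(i), \vec v_2(i)$ --- is a deterministic function of $\mathcal G^{i-1}$, and the only remaining randomness that affects membership in $\mathcal A_i$ and $\mathcal B_i$ is the fresh channel block $\mathcal G(i) = \{\bold g_{kj}(i)\}$, which is drawn from a continuous distribution independently of the past. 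First I would argue that, conditioned on $\mathcal G^{i-1}$, the event $\mathcal B_i$ is in fact \emph{deterministic}: whether $[\vec v_1(i)^\top \;\; \vec 0],[\vec 0\;\;\vec v_2(i)^\top] \in \mathrm{rowspan}[G_{21}^{i-1}V_1^{i-1}\;\; G_{22}^{i-1}V_2^{i-1}]$ depends only on $\mathcal G^{i-1}$. Thus it suffices to show: conditioned on any $\mathcal G^{i-1}$ for which $\mathcal B_i$ fails, we have $\Pr(\mathcal A_i \mid \mathcal G^{i-1}) = 0$, i.e.\ almost surely over $\mathcal G(i)$ the rank strictly increases, $\mathrm{rank}[G_{21}^{i}V_1^{i}\;\;G_{22}^{i}V_2^{i}] = \mathrm{rank}[G_{21}^{i-1}V_1^{i-1}\;\;G_{22}^{i-1}V_2^{i-1}] + 1$.

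The heart of the argument is this conditional statement. When we append the $i$-th row to the stacked matrix $[G_{21}^{n}V_1^{n}\;\;G_{22}^{n}V_2^{n}]$, that new row equals $[\bold g_{21}(i)\vec v_1(i)^\top \;\;\; \bold g_{22}(i)\vec v_2(i)^\top]$. The rank fails to increase precisely when this new row lies in the row span of the previous $i-1$ rows. Since $\mathcal B_i$ fails, \emph{at least one} of the two ``pure'' vectors --- say $[\vec v_1(i)^\top\;\;\vec 0]$, WLOG --- is not in that row span. I would then show that the set of pairs $(\bold g_{21}(i), \bold g_{22}(i)) \in \mathbb C^2$ for which $[\bold g_{21}(i)\vec v_1(i)^\top \;\;\; \bold g_{22}(i)\vec v_2(i)^\top]$ \emph{does} lie in the previous row span is contained in a proper algebraic subvariety of $\mathbb C^2$, hence has measure zero under the continuous distribution. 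Concretely: if both $\bold g_{21}(i) \neq 0$ and the new row were in the span, then solving the linear system would force $[\vec v_1(i)^\top\;\;(\bold g_{22}(i)/\bold g_{21}(i))\vec v_2(i)^\top]$ into the span; combined with the failure of $\mathcal B_i$ this yields a nontrivial polynomial constraint relating $\bold g_{21}(i)$ and $\bold g_{22}(i)$ (for instance, expressing that a certain determinant built from the augmented matrix vanishes), and that polynomial is not identically zero exactly because the pure vector is outside the span. A continuous distribution assigns zero mass to the zero set of a nonzero polynomial, so we are done for this $i$.

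Finally, I would collect the pieces: $\Pr(\mathcal A_i \cap \mathcal B_i^c) = \mathbb E[\mathbf 1_{\mathcal B_i^c}\,\Pr(\mathcal A_i \mid \mathcal G^{i-1})] = 0$ since the inner conditional probability vanishes on the event $\mathcal B_i^c$, and then $\Pr(\bm{\mathcal G}^n \in \cup_{i=1}^n (\mathcal A_i \cap \mathcal B_i^c)) \le \sum_{i=1}^n \Pr(\mathcal A_i \cap \mathcal B_i^c) = 0$. The main obstacle I anticipate is making the ``nonzero polynomial'' step fully rigorous: one must carefully bookkeep how the condition ``new row $\in$ previous row span'' translates into a polynomial identity in the two fresh scalars $\bold g_{21}(i),\bold g_{22}(i)$, handle the degenerate cases where $\vec v_1(i)$ or $\vec v_2(i)$ is zero (so one of the blocks of the new row vanishes), and verify in each case that the failure of $\mathcal B_i$ is exactly what prevents the relevant polynomial from being the zero polynomial --- i.e.\ that membership of the \emph{combined scaled} vector in the span would imply membership of a \emph{pure} vector, contradicting $\mathcal B_i^c$. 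Everything else is a routine union bound and the standard fact that continuous measures vanish on proper subvarieties.
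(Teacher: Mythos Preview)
Your proposal is correct and follows essentially the same architecture as the paper's proof: union bound over $i$, condition on $\mathcal G^{i-1}$ so that $\vec v_1(i),\vec v_2(i)$ and the event $\mathcal B_i$ become deterministic, and then show that on $\mathcal B_i^c$ the set of fresh coefficients $(\bold g_{21}(i),\bold g_{22}(i))$ for which the rank fails to increase has measure zero. The only difference is in how that last step is executed: the paper observes, via linearity of orthogonal projection onto $\mathcal L^c$, that $\mathcal A_i$ is exactly the event that $[\bold g_{21}(i),\bold g_{22}(i)]^\top$ lies in the nullspace of the nonzero matrix $\big[\text{Proj}_{\mathcal L^c}[\vec v_1(i)^\top\;\;\vec 0]^\top\;\;\text{Proj}_{\mathcal L^c}[\vec 0\;\;\vec v_2(i)^\top]^\top\big]$, so the bad set is a proper \emph{linear} subspace of $\mathbb C^2$ --- this cleanly sidesteps the polynomial/determinant bookkeeping and the degenerate-case analysis you flagged as the main obstacle.
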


Lemma~\ref{main5} implies that we need to prove the first inequality in Lemma \ref{main1} only for channel realizations $\bm{\mathcal G}^n=\mathcal G^n$, such that $\mathcal G^n \notin \cup_{i=1}^{n}\mathcal (\mathcal A_i \cap \mathcal B_i^c)$ (since, the rest have probability measure zero). Thus,  we only need to show that for any arbitrary channel realization $\bm{\mathcal G}^n=\mathcal G^n$ with the corresponding beamforming matrices $V_1^n,V_2^n$, and $\bm{\mathcal T}=\mathcal T$, such that $\mathcal G^n \notin \cup_{i=1}^{n}\mathcal (\mathcal A_i \cap \mathcal B_i^c)$,  we have \begin{equation} \label{eq:mainIneq} \Text{rank}[G_{11}^nV_1^n\quad G_{12}^nV_2^n] - \Text{rank}[G_{21}^nV_1^n \quad G_{22}^nV_2^n]\stackrel{}{ \leq}  \Text{rank}[G_{11}^{\mathcal T} V_1^{\mathcal T}\quad G_{12}^{\mathcal T} V_2^{\mathcal T}]. \end{equation}
Let $I(.)$ denote the indicator function, we now bound the left hand side of (\ref{eq:mainIneq}) as follows.
\begin{align}
&\Text{rank}[G_{11}^nV_1^n\quad G_{12}^nV_2^n] - \Text{rank}[G_{21}^nV_1^n \quad G_{22}^nV_2^n] \nonumber \\
&= \sum_{i=1}^{n} ( \Text{rank}[G_{11}^iV_1^i\quad G_{12}^iV_2^i] -\Text{rank}[G_{11}^{i-1}V_1^{i-1}\quad G_{12}^{i-1}V_2^{i-1}])- (\Text{rank}[G_{21}^iV_1^i \quad G_{22}^iV_2^i]-\Text{rank}[G_{21}^{i-1}V_1^{i-1} \quad G_{22}^{i-1}V_2^{i-1}]) \nonumber \\
&\leq \sum_{i=1}^{n}\max\{ ( \Text{rank}[G_{11}^iV_1^i\quad G_{12}^iV_2^i] -\Text{rank}[G_{11}^{i-1}V_1^{i-1}\quad G_{12}^{i-1}V_2^{i-1}])\nonumber\\
& - (\Text{rank}[G_{21}^iV_1^i \quad G_{22}^iV_2^i]-\Text{rank}[G_{21}^{i-1}V_1^{i-1} \quad G_{22}^{i-1}V_2^{i-1}]),0\}\nonumber \\
&\stackrel{(a)}{=}\sum_{i=1}^{n} I( \Text{rank}[G_{11}^iV_1^i\quad G_{12}^iV_2^i] =\Text{rank}[G_{11}^{i-1}V_1^{i-1}\quad G_{12}^{i-1}V_2^{i-1}]+1)\nonumber\\
& \times I(\Text{rank}[G_{21}^iV_1^i \quad G_{22}^iV_2^i]=\Text{rank}[G_{21}^{i-1}V_1^{i-1} \quad G_{22}^{i-1}V_2^{i-1}])   \nonumber \\
& =  \sum_{i=1}^{n} I(\mathcal G^n \in \mathcal A_i\cap \mathcal C_i)  =\sum_{i=1}^{n} (I(\mathcal G^n \in\mathcal A_i\cap\mathcal B_i \cap \mathcal C_i) + I(\mathcal G^n \in\mathcal A_i\cap \mathcal B_i^c\cap\mathcal C_i)) \nonumber \\
& \leq \sum_{i=1}^{n} (I(\mathcal G^n \in\mathcal B_i\cap\mathcal C_i) + I(\mathcal G^n \in\mathcal A_i\cap\mathcal B_i^c) )
\stackrel{(b)}{=} \sum_{i=1}^{n}I(\mathcal G^n \in \mathcal B_i\cap\mathcal C_i) \stackrel{(c)}{=} \sum_{i\in \mathcal T}^{} I( \mathcal G^n \in \mathcal C_i)\nonumber \\
&= \sum_{i\in \mathcal T}^{} I(\Text{rank}[G_{11}^iV_1^i\quad G_{12}^iV_2^i] =\Text{rank}[G_{11}^{i-1}V_1^{i-1}\quad G_{12}^{i-1}V_2^{i-1}]+1)          ,\label{x1}
\end{align}
where (a) holds since $\Text{rank}[G_{k1}^iV_1^i\quad G_{k2}^iV_2^i] -\Text{rank}[G_{k1}^{i-1}V_1^{i-1}\quad G_{k2}^{i-1}V_2^{i-1}]  \in \{0,1\}$ for $k=1,2$;
and (b) follows from the assumption that $\mathcal G^n \notin (\mathcal A_i\cap \mathcal B^c_i)$ for $i\in\{1,2,\ldots, n\}$;
and (c) follows from the fact that $\mathcal T =\{i| \mathcal G^n\in \mathcal B_i\}$.
We now only need to show the following to complete the proof of (\ref{eq:mainIneq}).
\begin{align}
&\sum_{i\in \mathcal T}^{} I(\Text{rank}[G_{11}^iV_1^i\quad G_{12}^iV_2^i] =\Text{rank}[G_{11}^{i-1}V_1^{i-1}\quad G_{12}^{i-1}V_2^{i-1}]+1)    \leq \Text{rank}[G_{11}^{\mathcal T} V_1^{\mathcal T} \quad G_{12}^{\mathcal T} V_2^{\mathcal T} ]. \label{need}
\end{align}

Without loss of generality, let us assume that ${\mathcal T}=\{\tau _1,\tau _2,\ldots ,\tau _k\}$ for some $k$, such that $\tau _1<\tau _2< \ldots < \tau _k$.
We define ${\mathcal T}_j\triangleq \{\tau _1,\tau _2,\ldots, \tau _j\}$, and use $V_1^{{\mathcal T}_j}$ and $V_2^{{\mathcal T}_j}$ to denote the sub-matrices of $V_1^n$ and $V_2^n$ with rows in ${\mathcal T}_j$. We also use $G_{11}^{{\mathcal T}_j}$ to denote the $|{\mathcal T}_j|\times |{\mathcal T}_j|$ diagonal matrix with channel coefficients of $g_{11}(t)$ at timeslots $t\in {\mathcal T}_j$ on its diagonal (similarly defined for other channel matrices). We now present a claim that will be used to show (\ref{need}) and complete the proof.
\begin{claim} \label{cl3}
For any $j=1,2,\ldots, k$,
\begin{align}
& I( \Text{rank}[G_{11}^{\tau _j}V_1^{\tau _j}\quad G_{12}^{\tau _j}V_2^{\tau _j}] =\Text{rank}[G_{11}^{\tau _j-1}V_1^{\tau _j-1}\quad G_{12}^{\tau _j-1}V_2^{\tau _j-1}]+1) \nonumber \\
& \leq  I( \Text{rank}[G_{11}^{{\mathcal T}_j}V_1^{{\mathcal T}_j}\quad G_{12}^{{\mathcal T}_j}V_2^{{\mathcal T}_j}] =\Text{rank}[G_{11}^{{\mathcal T}_{j-1}}V_1^{{\mathcal T}_{j-1}}\quad G_{12}^{{\mathcal T}_{j-1}}V_2^{{\mathcal T}_{j-1}}]+1).
\end{align}
\end{claim}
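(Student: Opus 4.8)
\textbf{Proof proposal for Claim~\ref{cl3}.}

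The plan is to show the claim by constructing, whenever the left-hand indicator equals $1$, a vector in $\text{colspan}[G_{11}^{\mathcal T_j}V_1^{\mathcal T_j}\quad G_{12}^{\mathcal T_j}V_2^{\mathcal T_j}]$ that lies outside $\text{colspan}[G_{11}^{\mathcal T_{j-1}}V_1^{\mathcal T_{j-1}}\quad G_{12}^{\mathcal T_{j-1}}V_2^{\mathcal T_{j-1}}]$. Suppose the left-hand side is $1$, i.e., the row $[\vec v_1(\tau_j)^\top g_{11}(\tau_j)\quad \vec v_2(\tau_j)^\top g_{12}(\tau_j)]$ is linearly independent of the rows of $[G_{11}^{\tau_j-1}V_1^{\tau_j-1}\quad G_{12}^{\tau_j-1}V_2^{\tau_j-1}]$. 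Since $\mathcal T_{j-1}\subseteq\{1,\ldots,\tau_j-1\}$, the rows indexed by $\mathcal T_{j-1}$ form a subset of those rows, so the $\tau_j$-th row is also independent of the rows indexed by $\mathcal T_{j-1}$.

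First I would set up the matrices with rows ordered as $\mathcal T_j = \mathcal T_{j-1}\cup\{\tau_j\}$, so that $[G_{11}^{\mathcal T_j}V_1^{\mathcal T_j}\quad G_{12}^{\mathcal T_j}V_2^{\mathcal T_j}]$ is obtained from $[G_{11}^{\mathcal T_{j-1}}V_1^{\mathcal T_{j-1}}\quad G_{12}^{\mathcal T_{j-1}}V_2^{\mathcal T_{j-1}}]$ by appending one row (the $\tau_j$-th row of the full matrix $[G_{11}^nV_1^n\quad G_{12}^nV_2^n]$). Adding a row increases the rank by at most $1$, and it increases the rank by exactly $1$ if and only if that new row is not in the row span of the existing rows. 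So the right-hand indicator equals $1$ precisely when the $\tau_j$-th row is independent of the rows indexed by $\mathcal T_{j-1}$, and the argument above shows this independence follows from the left-hand indicator being $1$. Hence LHS $\le$ RHS as indicators.

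The main subtlety — and the step I would treat most carefully — is the ordering/indexing bookkeeping: making precise that $\text{rank}[G_{11}^{\tau_j}V_1^{\tau_j}\quad G_{12}^{\tau_j}V_2^{\tau_j}] = \text{rank}[G_{11}^{\tau_j-1}V_1^{\tau_j-1}\quad G_{12}^{\tau_j-1}V_2^{\tau_j-1}]+1$ means exactly ``the $\tau_j$-th row of $[G_{11}^nV_1^n\quad G_{12}^nV_2^n]$ is not in the span of its first $\tau_j-1$ rows,'' and then using the monotonicity of row-span under taking subsets of rows (rows of $\mathcal T_{j-1}$ are among the first $\tau_j-1$) to deduce the same row is not in the span of the $\mathcal T_{j-1}$-rows. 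Once this is spelled out, applying~\eqref{need} is immediate: summing Claim~\ref{cl3} over $j=1,\ldots,k$ telescopes the right-hand side,
\begin{align}
\sum_{j=1}^{k} I\big(\text{rank}[G_{11}^{\tau_j}V_1^{\tau_j}\ G_{12}^{\tau_j}V_2^{\tau_j}] &=\text{rank}[G_{11}^{\tau_j-1}V_1^{\tau_j-1}\ G_{12}^{\tau_j-1}V_2^{\tau_j-1}]+1\big) \nonumber\\
&\le \sum_{j=1}^{k}\Big(\text{rank}[G_{11}^{\mathcal T_j}V_1^{\mathcal T_j}\ G_{12}^{\mathcal T_j}V_2^{\mathcal T_j}] - \text{rank}[G_{11}^{\mathcal T_{j-1}}V_1^{\mathcal T_{j-1}}\ G_{12}^{\mathcal T_{j-1}}V_2^{\mathcal T_{j-1}}]\Big) \nonumber\\
&= \text{rank}[G_{11}^{\mathcal T}V_1^{\mathcal T}\ G_{12}^{\mathcal T}V_2^{\mathcal T}],
\end{align}
using $\mathcal T_k=\mathcal T$ and the footnote convention that the rank of the empty ($\mathcal T_0$) matrix is $0$; this is exactly~\eqref{need}, completing the proof of~\eqref{eq:mainIneq}.
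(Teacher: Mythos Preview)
Your argument is correct and essentially identical to the paper's own proof: both assume the left indicator equals $1$, interpret this as the $\tau_j$-th row being independent of the rowspan of the first $\tau_j-1$ rows, and then use $\mathcal T_{j-1}\subseteq\{1,\ldots,\tau_j-1\}$ to conclude independence from the smaller rowspan, yielding the right indicator equal to $1$. Your added telescoping paragraph also matches exactly how the paper uses the claim to obtain~\eqref{need}.
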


\begin{proof}
The claim is trivially true when $\Text{rank}[G_{11}^{\tau _j}V_1^{\tau _j}\quad G_{12}^{\tau _j}V_2^{\tau _j}] =\Text{rank}[G_{11}^{\tau _j-1}V_1^{\tau _j-1}\quad G_{12}^{\tau _j-1}V_2^{\tau _j-1}].$
So, suppose $ \Text{rank}[G_{11}^{\tau _j}V_1^{\tau _j}\quad G_{12}^{\tau _j}V_2^{\tau _j}] =\Text{rank}[G_{11}^{\tau _j-1}V_1^{\tau _j-1}\quad G_{12}^{\tau _j-1}V_2^{\tau _j-1}]+1$.
 It means that
 $[g_{11}(\tau _j)\vec v_1(\tau _j)^\top \quad g_{12}(\tau _j)\vec v_2(\tau _j)^\top]$ is linearly independent of $\text{rowspan}[G_{11}^{\tau _j-1}V_1^{\tau _j-1}\quad G_{12}^{\tau _j-1}V_2^{\tau _j-1}]$.
 Since ${\mathcal T}_{j-1}\subseteq \{1,2,\ldots, \tau _j-1\}$, then \\
$[g_{11}(\tau _j)\vec v_1(\tau _j)^\top \quad g_{12}(\tau _j)\vec v_2(\tau _j)^\top]$ is also linearly independent of
$\text{rowspan}[G_{11}^{{\mathcal T}_{j-1}}V_1^{{\mathcal T}_{j-1}}\quad G_{12}^{{\mathcal T}_{j-1}}V_2^{{\mathcal T}_{j-1}}]$. Hence,
\begin{align}
&\Text{rank}[G_{11}^{{\mathcal T}_j}V_1^{{\mathcal T}_j}\quad G_{12}^{{\mathcal T}_j}V_2^{{\mathcal T}_j}] =\Text{rank}[G_{11}^{{\mathcal T}_{j-1}}V_1^{{\mathcal T}_{j-1}}\quad G_{12}^{{\mathcal T}_{j-1}}V_2^{{\mathcal T}_{j-1}}]+1.\nonumber
\end{align}
\end{proof}

Based on this claim, the proof of (\ref{need}) is as follows.
\begin{align}
&\sum_{i\in {\mathcal T}}^{} I(\Text{rank}[G_{11}^iV_1^i\quad G_{12}^iV_2^i] =\Text{rank}[G_{11}^{i-1}V_1^{i-1}\quad G_{12}^{i-1}V_2^{i-1}]+1) \nonumber \\
&=\sum_{j=1}^{k} I( \Text{rank}[G_{11}^{\tau _j}V_1^{\tau _j}\quad G_{12}^{\tau_j}V_2^{\tau _j}] =\Text{rank}[G_{11}^{\tau_j-1}V_1^{\tau_j-1}\quad G_{12}^{\tau_j-1}V_2^{\tau_j-1}]+1) \nonumber \\
&\stackrel{\text{Claim } \ref{cl3}}{\leq }\sum_{j=1}^{k} I( \Text{rank}[G_{11}^{{\mathcal T}_j}V_1^{{\mathcal T}_j}\quad G_{12}^{{\mathcal T}_j}V_2^{{\mathcal T}_j}] =\Text{rank}[G_{11}^{{\mathcal T}_{j-1}}V_1^{{\mathcal T}_{j-1}}\quad G_{12}^{{\mathcal T}_{j-1}}V_2^{{\mathcal T}_{j-1}}]+1)  \nonumber   \\
&=\Text{rank}[G_{11}^{{\mathcal T}_k}V_1^{{\mathcal T}_k}\quad G_{12}^{{\mathcal T}_k}V_2^{{\mathcal T}_k}]=\Text{rank}[G_{11}^{\mathcal T} V_1^{\mathcal T}\quad G_{12}^{\mathcal T} {V}_2^{\mathcal T}].  \nonumber
\end{align}


\subsection{Proof of $\Text{rank}[\bold{V}_j^{\bm{\mathcal T}}] \leq \bold{r}_j,  \quad (j=1,2):$} \label{lemm4}
It is sufficient to prove that
$\Text{rank}[\bold{V}_1^{\bm{\mathcal T}}] \leq \bold{r}_1$,
since the other inequality (i.e. $\Text{rank}[\bold{V}_2^{\bm{\mathcal T}}] \leq \bold{r}_2$) can be proven similarly.
We show that for any realization $\mathcal G^n=\{ G_{kj}^n \}_{k,j\in\{1,2\}}$ with the corresponding values $\mathcal T$, $r_1$, and matrices $V_1^n,V_2^n$, we have $\Text{rank}[V_1^{\bold{{\mathcal T}}}] \leq r_1$.
But according to definition of $r_1$, it is sufficient to prove
\begin{equation}\label{suffProof}
\text{rowspan}[V_1^{\bold{\mathcal T}} ] \subseteq \Text{span}(\vec s_{ m_1(n) \times 1}|\quad \exists \vec  l_{ n\times 1} \quad s.t.\quad  [\vec s^\top\quad  \vec 0_{1\times m_2(n)}]=\vec l~^\top [G_{21}^nV_1^n \quad G_{22}^nV_2^n]     ).
\end{equation}

The following proves (\ref{suffProof}), thereby completing the proof for $\Text{rank}[V_1^{\bold{{\mathcal T}}}] \leq r_1$:
\begin{align}
 \text{rowspan}[V_1^{\bold{\mathcal T}} ] & =  \text{span}(\vec v_1(i) | 1\leq  i \leq n,  [ \vec v_1(i)^\top \quad \vec 0_{1\times m_2(n)}], [  \vec 0_{1\times m_1(n)} \quad \vec v_2(i)^\top ]  \in \text{rowspan} [G_{21}^{i-1}V_1^{i-1}  \quad  G_{22}^{i-1}V_2^{i-1}])  \nonumber \\
&\subseteq  \text{span}(\vec v_1(i) | 1\leq  i \leq n,  [ \vec v_1(i)^\top \quad \vec 0_{1\times m_2(n)}], [  \vec 0_{1\times m_1(n)} \quad \vec v_2(i)^\top ]  \in \text{rowspan} [G_{21}^{n}V_1^{n}  \quad  G_{22}^{n}V_2^{n}]) \nonumber \\
&\subseteq  \text{span}(\vec v_1(i) | 1\leq  i \leq n,  [ \vec v_1(i)^\top \quad \vec 0_{1\times m_2(n)}]  \in \text{rowspan} [G_{21}^{n}V_1^{n}  \quad  G_{22}^{n}V_2^{n}]) \nonumber \\
&\subseteq \Text{span}(\vec s_{ m_1(n) \times 1}|\quad \exists \vec  l_{ n\times 1} \quad s.t.\quad  [\vec s^\top\quad  \vec 0_{1\times m_2(n)}]=\vec l~^\top [G_{21}^nV_1^n \quad G_{22}^nV_2^n]     ).\nonumber 
\end{align}

\subsection{Proof of $ \bold{r}_j \stackrel{a.s.}{\leq}   \Text{rank}[\bold{G}_{21}^n\bold{V}_1^n \quad \bold{G}_{22}^n\bold{V}_2^n]-\Text{rank}[\bold{V}_{3-j}^n] , \quad  (j=1,2):$} \label{lemm5}
We will show this for $j=1$, i.e., $\bold{r}_1 \stackrel{a.s.}{\leq}  \Text{rank}[\bold{G}_{21}^n\bold{V}_1^n \quad \bold{G}_{22}^n\bold{V}_2^n]-\Text{rank}[\bold{V}_2^n]$.
The proof for $j=2$ will be similar. Since $\Text{rank}[\bold{G}_{22}^n\bold{V}_2^n] \stackrel{a.s.}{=} \Text{rank}[\bold{V}_2^n]$, it is sufficient to show that $ \bold{r}_1 \stackrel{}{\leq}   \Text{rank}[\bold{G}_{21}^n\bold{V}_1^n \quad \bold{G}_{22}^n\bold{V}_2^n]-\Text{rank}[\bold{G}_{22}^n \bold{V}_2^n]$. To do so, we show that for any realization $\mathcal G^n=\{ G_{kj}^n \}_{k,j\in\{1,2\}}$ with the corresponding value $r_1$, and matrices $V_1^n,V_2^n$, we have
$r_1 \stackrel{}{\leq}   \Text{rank}[G_{21}^nV_1^n \quad G_{22}^nV_2^n]-\Text{rank}[G_{22}^n V_2^n]$.

Since $r_1=  \Text{dim}(\Text{span}(\vec s_{ m_1(n) \times 1}|\quad \exists \vec l_{ n \times 1} \quad s.t.\quad  [\vec s^\top \quad \vec 0_{1\times m_2(n)}]=\vec l~^\top [G_{21}^nV_1^n \quad G_{22}^nV_2^n]      ))$,
\begin{equation}
\exists {L}_{r_1\times n} \quad s.t. \quad [{S} \quad 0_{r_1\times m_2(n)}]={L}[G_{21}^nV_1^n \quad G_{22}^n V_2^n],
\end{equation}
for some ${S}_{r_1\times m_1(n)}$, such that $\Text{rank} [{S}]=r_1$.
This means
\begin{equation}
 LG_{22}^nV_2^n=0_{r_1\times m_2(n)}, \quad  LG_{21}^nV_1^n=S,\quad \Text{rank}[{L} G_{21}^nV_1^n]=r_1.\label{x5}
\end{equation}

We now state a claim that will be useful in completing the proof.
\begin{claim} \label{clfrob}
For three matrices $A,B,C$ where the number of columns in $A$ is equal to the number of rows in $B,C$,
\begin{equation}
\Text{rank} [AB \quad  AC] - \Text{rank}[AC] \leq \Text{rank}[B \quad C] - \Text{rank}[C].
\end{equation}
\end{claim}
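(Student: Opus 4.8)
The plan is to recast the inequality as a monotonicity statement about how far the rank of a matrix can drop when it is multiplied on the left by the fixed matrix $A$. First I would observe that placing $B$ and $C$ side by side commutes with left multiplication, so $[AB \quad AC] = A\,[B \quad C]$; writing $M \triangleq [B \quad C]$, the claim becomes equivalent to
\begin{equation*}
\Text{rank}(M) - \Text{rank}(AM) \;\geq\; \Text{rank}(C) - \Text{rank}(AC).
\end{equation*}

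The key tool is the elementary rank--nullity identity: for any matrix $N$ whose number of rows equals the number of columns of $A$, the linear map $v \mapsto Av$ restricted to $\text{colspan}(N)$ has image $\text{colspan}(AN)$ and kernel $\text{colspan}(N)\cap\ker(A)$, whence
\begin{equation*}
\Text{rank}(N) - \Text{rank}(AN) \;=\; \dim\!\bigl(\text{colspan}(N)\cap\ker(A)\bigr).
\end{equation*}
I would apply this once with $N = M = [B \quad C]$ and once with $N = C$, which reduces the desired inequality to the statement $\dim\!\bigl(\text{colspan}(C)\cap\ker(A)\bigr) \leq \dim\!\bigl(\text{colspan}(M)\cap\ker(A)\bigr)$.

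This last step is immediate: the columns of $C$ are among the columns of $M$, so $\text{colspan}(C)\subseteq\text{colspan}(M)$, hence $\text{colspan}(C)\cap\ker(A)\subseteq\text{colspan}(M)\cap\ker(A)$, and taking dimensions completes the argument. I do not expect any genuine obstacle here; the whole thing is a short subspace-containment computation once the rank--nullity identity is set up, and the only point requiring a little care is keeping the dimensions of $A$, $B$, $C$ consistent and correctly stating the identity for $A$ restricted to the appropriate column space. As an alternative one could take $Z$ to be the matrix that selects the $C$-block of $M = [B \quad C]$, so that $MZ = C$, and invoke the classical Frobenius rank inequality $\Text{rank}(AM) + \Text{rank}(MZ) \leq \Text{rank}(M) + \Text{rank}(AMZ)$; but the direct argument above is shorter and more transparent.
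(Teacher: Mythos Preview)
Your argument is correct. The rank--nullity identity $\Text{rank}(N)-\Text{rank}(AN)=\dim\bigl(\text{colspan}(N)\cap\ker(A)\bigr)$ is exactly right, and the subspace containment $\text{colspan}(C)\subseteq\text{colspan}(M)$ immediately gives the needed inequality on dimensions.

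Your main route differs from the paper's: the paper takes precisely the alternative you sketch at the end, invoking the Frobenius rank inequality $\Text{rank}(XY)+\Text{rank}(YZ)\leq\Text{rank}(XYZ)+\Text{rank}(Y)$ with $X=A$, $Y=[B\quad C]$, and $Z=[0\quad I]^\top$, then rearranging. Your direct rank--nullity proof is more self-contained, since it does not rely on Frobenius as a black box and in fact amounts to a transparent proof of that inequality in this particular instance; the paper's version is a one-line citation but presupposes the reader knows Frobenius. Both are short and equally valid, and it is worth noting that the ``alternative'' you describe is verbatim the paper's approach.
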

\begin{proof}
By Frobenius's inequality, for any three matrices $X,Y,Z$ where $XY$, $YZ$, and $XYZ$ are defined,
\begin{equation}
\Text{rank} [XY] + \Text{rank}[YZ] \leq \Text{rank}[XYZ] + \Text{rank} [Y].
\end{equation}
By setting $X=A, Y=[B \quad C], Z=[0 \quad I]^\top$, where $I$ is the identity matrix, the desired result follows.
\end{proof}

Therefore, by setting $A=L,  B= [G_{21}^nV_1^n \quad G_{22}^nV_2^n],  C=G_{22}^nV_2^n$ in Claim \ref{clfrob},   and using (\ref{x5}), we get
\begin{equation}
r_1 -0 \leq   \Text{rank}[G_{21}^nV_1^n \quad G_{22}^nV_2^n]-\Text{rank}[G_{22}^nV_{2}^n],
\end{equation}
which completes the proof.


\section{ Proof of Lemma \ref{main5}} \label{lemm6}
Here we restate Lemma \ref{main5} before proving it.

\begin{lemmarep}{Lemma~\ref{main5}}
Consider a fixed
linear coding strategy $\{  f_1^{(n)},f_2^{(n)} \}$, with corresponding $ \bold{V}_{1}^{n}\stackrel{\Delta}{=} \bold{V}_{11}^{n},\bold{V}_{2}^{n} \stackrel{\Delta}{=} \bold{V}_{12}^{n}$ as defined in (\ref{coding}). For any $i \in \{1,2, \ldots ,n \}$, let $\mathcal  A_i,\mathcal B_i$,  denote the following sets:
\begin{itemize}
\item $\mathcal A_i\triangleq \{ \mathcal G^n| \quad \Text{rank}[G_{21}^{i}V_1^{i} \quad G_{22}^{i}V_2^{i}]=\Text{rank}[G_{21}^{i-1}V_1^{i-1} \quad G_{22}^{i-1}V_2^{i-1}]  \} .  $
\item $\mathcal B_i\triangleq \{ \mathcal G^n|  \quad [\vec v_1(i)^\top \quad \vec 0_{1\times m_2(n)}],[\vec 0_{1\times m_1(n)}\quad \vec v_2(i)^\top]  \in   \Text{rowspan} [G_{21}^{i-1}V_1^{i-1}  \quad  G_{22}^{i-1}V_2^{i-1}] \}  $.
\end{itemize}
Then,
\begin{equation}
\Pr (\bm{\mathcal G}^n \in \cup_{i=1}^{n}(\mathcal A_i\cap \mathcal B_i^c))=0.\nonumber
\end{equation}
\end{lemmarep}

\begin{proof}
Note that due to Union Bound, it is sufficient to show that for any $i \in \{1,2, \ldots ,n \}$,
\begin{equation}
\Pr (\bm{\mathcal G}^n \in \mathcal A_i\cap \mathcal B_i^c)=0.\nonumber
\end{equation}
Consider an arbitrary $i \in \{1,2, \ldots ,n \}$. Due to Total Probability Law, it is sufficient to show that for any channel realization of the first $i-1$ timeslots, denoted by $\mathcal G^{i-1}=\{G_{kj}^{i-1}\}_{j,k\in\{1,2\}}$, we have
\begin{equation}
\label{eq:mainEq1}  \Pr(\bm{\mathcal G}^n \in \mathcal A_i\cap  \mathcal B^c_i | \bm{\mathcal G}^{i-1}=\mathcal G^{i-1})=0.
\end{equation}

Consider an arbitrary channel realization of the first $i-1$ timeslots $\mathcal G^{i-1}=\{G_{kj}^{i-1}\}_{j,k\in\{1,2\}}$, with corresponding matrices $V_1^{i},V_2^{i}$ (which are now deterministic).
Also, suppose that given $\mathcal G^{i-1}$, $ \mathcal  B^c_i$ occurs; since otherwise, the proof would be complete.
On the other hand, assuming $ \mathcal  B^c_i$ occurs, and denoting $\mathcal L = \Text{rowspan} [G_{21}^{i-1}V_1^{i-1}  \quad  G_{22}^{i-1}V_2^{i-1}]$, at least one of the following is true according to the definition of $ \mathcal  B_i$:
\begin{align}
&[\vec v_1(i)^\top \quad \vec 0_{1\times m_2(n)}] \notin   \mathcal L \qquad  \Rightarrow \qquad   \text{Proj}_{\mathcal L^c} [\vec v_1(i)^\top \quad \vec 0_{1\times m_2(n)}] \ne 0 \\
& [\vec 0_{1\times m_1(n)}\quad \vec v_2(i)^\top]  \notin  \mathcal L \qquad  \Rightarrow \qquad   \text{Proj}_{\mathcal L^c} [\vec 0_{1\times m_1(n)}\quad \vec v_2(i)^\top] \ne 0.
\end{align}
Therefore, the $(m_1(n)+m_2(n))\times 2$ matrix 
$[\text{Proj}_{\mathcal L^c} [\vec v_1(i)^\top \quad \vec 0_{1\times m_2(n)}] ^\top   \qquad      \text{Proj}_{\mathcal L^c} [\vec 0_{1\times m_1(n)}\quad \vec v_2(i)^\top]^\top ]$ is non-zero, which means that its null space has dimension strictly lower than $2$.
Hence, we have,
\begin{align*}
& \Pr(\bm{\mathcal G}^n \in \mathcal A_i\cap  \mathcal B^c_i | \bm{\mathcal G}^{i-1}=\mathcal G^{i-1})  \stackrel{(a)}{=}
\Pr(\bm{\mathcal G}^n \in \mathcal A_i| \bm{\mathcal G}^{i-1}=\mathcal G^{i-1})   \\
&  \stackrel{(b)}{=}   \Pr(     \text{Proj}_{\mathcal L^c} [{\bf g}_{21}(i)\vec v_1(i)^\top   \quad {\bf g}_{22}(i)\vec v_2(i)^\top] = 0     | \bm{\mathcal G}^{i-1}=\mathcal G^{i-1})   \\
& \stackrel{(c)}{=}   \Pr(   {\bf g}_{21}(i)  \text{Proj}_{\mathcal L^c} [\vec v_1(i)^\top   \quad 0] +  {\bf g}_{22}(i)\text{Proj}_{\mathcal L^c} [0   \quad \vec v_2(i)^\top] = 0     | \bm{\mathcal G}^{i-1}=\mathcal G^{i-1})   \\
& = \Pr(   [\text{Proj}_{\mathcal L^c} [\vec v_1(i)^\top \quad \vec 0_{1\times m_2(n)}] ^\top   \qquad      \text{Proj}_{\mathcal L^c} [\vec 0_{1\times m_1(n)}\quad \vec v_2(i)^\top]^\top ] \left[\begin{array}{c}{\bf g}_{21}(i)\\{\bf g}_{22}(i)  \end{array}\right]  = 0     | \bm{\mathcal G}^{i-1}=\mathcal G^{i-1}) \\
& = \Pr(  \left[\begin{array}{c}{\bf g}_{21}(i)\\{\bf g}_{22}(i)  \end{array}\right]   \in    \Text{nullspace}[\text{Proj}_{\mathcal L^c} [\vec v_1(i)^\top \quad \vec 0_{1\times m_2(n)}] ^\top   \qquad      \text{Proj}_{\mathcal L^c} [\vec 0_{1\times m_1(n)}\quad \vec v_2(i)^\top]^\top ]    | \bm{\mathcal G}^{i-1}=\mathcal G^{i-1}) \\
& \stackrel{(d)}{=}  0,
\end{align*}
where (a) holds since we assumed that for realization $\mathcal G^{i-1}$, $\mathcal B_i^c$ occurs;
(b) holds according to the definition of $\mathcal A_i$;
(c) holds due to linearity of orthogonal projection;
and (d)  holds since 
the $(m_1(n)+m_2(n))\times 2$ matrix 
$[\text{Proj}_{\mathcal L^c} [\vec v_1(i)^\top \quad \vec 0_{1\times m_2(n)}] ^\top   \qquad      \text{Proj}_{\mathcal L^c} [\vec 0_{1\times m_1(n)}\quad \vec v_2(i)^\top]^\top ]$ is non-zero, which means that its null space, which is a subspace in $\mathbb R^2$, has dimension strictly lower than $2$.
Therefore, the probability that the random vector $ \left[\begin{array}{c}{\bf g}_{21}(i)\\{\bf g}_{22}(i)  \end{array}\right] $ lies in a subspace in $\mathbb R^2$ of strictly lower dimension (than 2) is zero.

\end{proof}

\end{appendices}
\end{document}